
%
%
\documentclass[onecolumn, a4paper]{IEEEtran}



\usepackage[normalem]{ulem} 

\usepackage{array}
\usepackage{cite}
\usepackage{tabu}
\usepackage{color}

\usepackage[normalem]{ulem}

\makeatletter
\newcommand{\thickhline}{%
    \noalign {\ifnum 0=`}\fi \hrule height 1pt
    \futurelet \reserved@a \@xhline
}
\newcolumntype{"}{@{\hskip\tabcolsep\vrule width 1pt\hskip\tabcolsep}}
\makeatother

\usepackage{hyperref}
\hypersetup{citecolor=blue,pdfborder=0 0 0,pdfstartview={FitH}}
\usepackage{amsfonts}
\usepackage{tabularx}

%
\ifCLASSINFOpdf
\usepackage[pdftex]{graphicx}
\else
\fi

\usepackage{caption}
\usepackage{subcaption}

%
\usepackage{amsmath, amsthm, amssymb}

\newtheorem{theorem}{Theorem}
\newtheorem{lemma}[theorem]{Lemma}
\newtheorem{example}{Example}
\newtheorem{proposition}[theorem]{Proposition}
\newtheorem{definition}{Definition}

\newtheorem{cnst}{Construction}

\newcommand{\vlen}{n}
\newcommand{\blen}{k}
\newcommand{\vbr}{t}
\newcommand{\vz}{\boldsymbol{z}}
\newcommand{\vy}{\boldsymbol{y}}
\newcommand{\vx}{\boldsymbol{x}}
\newcommand{\vs}{\boldsymbol{s}}
\newcommand{\vv}{\boldsymbol{v}}
\newcommand{\vu}{\boldsymbol{u}}
\newcommand{\vp}{\boldsymbol{p}}
\newcommand{\va}{\boldsymbol{a}}
\newcommand{\vb}{\boldsymbol{b}}
\newcommand{\ve}{\boldsymbol{e}}
\newcommand{\vc}{\boldsymbol{c}}
\newcommand{\repZ}{Z}
\newcommand{\enc}{ENC}
\newcommand{\dmax}{D_{p}}
\newcommand{\pmax}{p'}
\newcommand\Hbl[2]{B_{#1}(#2)}
\newcommand\Hbln[1]{B_{#1}}
\newcommand{\code}{\mathcal{C}}
\newcommand{\mH}{\mathsf{H}}
\newcommand{\mS}{\mathsf{S}}
\newcommand{\mI}{\mathsf{I}}
\newcommand{\dS}{\mathsf{\tilde{S}}}
\newcommand{\dr}{\tilde{r}}
\newcommand{\enci}{\mathcal{E}}
\newcommand{\deci}{\mathcal{D}}
\newcommand{\reso}{m}
\newcommand\enum[2]{\mathrm{enum}_{#1}(#2)}
\newcommand{\vzero}{\boldsymbol{0}}
\newcommand{\detu}{U}
\newcommand{\rbits}{\boldsymbol{r}}
\newcommand{\rat}{r}
\newcommand{\ssz}{s}

\definecolor{purp}{rgb}{0.5,0,0.5}
\definecolor{grn}{rgb}{0.2,0.6,0.3}

\newcommand\ycr[1]{{\color{black}#1}}
\newcommand\ycrr[1]{{\color{black}#1}}

\begin{document}

%

\title{Efficient Compression of Long Arbitrary Sequences with No Reference at the Encoder}

\author{Yuval~Cassuto,~\IEEEmembership{Senior Member,~IEEE,} and Jacob~Ziv,~\IEEEmembership{Life Fellow,~IEEE} 
\thanks{Parts of this work were presented at the 2020 International Zurich Seminar on Information and Communication.}
\thanks{Authors are with the Andrew and Erna Viterbi Department of Electrical Engineering, Technion - Israel Institute of Technology, Haifa, Israel 3200003 (emails: \{ycassuto,jz\}@ee.technion.ac.il).}}

%



\maketitle

\begin{abstract}
In a distributed information application an encoder compresses an arbitrary vector while a similar reference vector is available to the decoder as side information. For the Hamming-distance similarity measure, and when guaranteed perfect reconstruction is required, we present two contributions to the solution of this problem. One result shows that when a set of potential reference vectors is available to the encoder, lower compression rates can be achieved when the set satisfies a certain clustering property. Another result reduces the best known decoding complexity from exponential in the vector length $\vlen$ to $O(\vlen^{1.5})$ by generalized concatenation of inner coset codes and outer error-correcting codes. One potential application of the results is the compression of DNA sequences, where similar (but not identical) reference vectors are shared among senders and receivers.       
   
\end{abstract}

\section{Introduction}\label{sec:intro}
Data compression exploits similarity between data to save transmission bandwidth or storage. Similarity can be internal to one data sequence, or external between multiple data sequences. In classical information theory, similarity is modeled through the abstraction of an {\em information source}, which is defined probabilistically~\cite{ShannonC:48}. Intra-sequence similarity exists because a long sequence is extracted from a source with a given probability distribution, and inter-sequence similarity is due to a non-trivial joint distribution between the sources that generate the sequences. Often times it is challenging to define the information source by a probability distribution. Such is the case, for example, in DNA sequences that are generated by nature, with a distribution that is unclear and hard to define. Still, compressing long sequences from unstructured sources is highly desired with the advent of data-rich applications, which generate, analyze, and manipulate volumes of these sequences.

In this paper we study and develop tools for compression of sequences lacking probabilistic models. The setup of interest is compressing at the encoder a sequence (vector) $\vy$ that is similar to a reference vector $\vz$ available at the decoder, while similarity is expressed by a bound on the Hamming distance between $\vy$ and $\vz$. Our particular contributions to this setup are in two directions: first is a theoretical study of the case where the encoder has a set of candidate reference vectors, but does not know which particular $\vz$ from the set the decoder has; second is low-complexity compression and decompression for guaranteed zero-error reconstruction.  

An encoder compressing a vector $\vy$ for a decoder having {\em side-information} $\vz$ is a classical and well-studied problem in information theory. In particular, it is covered (as a special case) by the {\em Slepian-Wolf} coding scheme~\cite{SlepianW:73b} when the distributions of $\vy$ and $\vy$-given-$\vz$ are known. For cases when the distributions are unknown, Ziv~\cite{ZivJ:84} pursued the {\em individual-sequence} approach where statistical properties are replaced by combinatorial finite-state complexity measures. However, these combinatorial measures too are hard to characterize for general sequences of certain type, e.g., DNA sequences. This leaves us with the {\em Hamming distance} as the most rudimentary and robust measure of similarity between sequences. Compressing $\vy$ given side-information $\vz$ at the decoder, where $\vy$ and $\vz$ have bounded Hamming distance, was studied by Orlitsky and Viswanathan in~\cite{OrlitskyA:03}. They show a reduction of the Hamming-bounded compression problem to error-correcting codes in the Hamming metric, under the framework of {\em coset coding}. A similar scheme but for sets instead of sequences appears in~\cite{MinskyY:03}, and followed by extensions of the techniques motivated by biometric authentication~\cite{DodisY:08}. Many results exist, starting with~\cite{WynerA:74}, that apply coset coding to source coding (see an extensive study in~\cite{PradhanS:03}), but the uniqueness of~\cite{OrlitskyA:03} is that zero-error reconstruction is {\em guaranteed}, as needed in the applications that drive our present study.

This paper continues the line of work on guaranteed-success compression with Hamming-bounded side information. In the first part of the paper (Section~\ref{sec:dmax}), we study the case where the encoder as usual does not know the decoder's reference vector $\vz$, but it does have a set $\repZ$ of vectors that contains $\vz$ (among many other vectors). Our results in this part show that if the vectors in $\repZ$ have a certain well-defined ``clustering'' property, then it is possible to reduce the compression rate below the best known. This can be achieved without any probabilistic assumptions on the set $\repZ$, and without directly enforcing a bound on its size. Our results in this part are for guaranteed-decoding {\em average} compression rate, where the average is taken over the random hash\footnote{hash functions are also known as binning functions in information theory.} function used, and {\em not} over the input $\vy$ (which has no probability distribution). For the same model our results also include a lower bound on compression rate for any scheme that uses random hashing. In the second part of the paper (Section~\ref{sec:low_complexity}), we return to the classical model of~\cite{OrlitskyA:03} (no $\repZ$ in the encoder), and propose coding schemes with low complexity of encoding and decoding. For guaranteed decoding of length-$\vlen$ vectors with a constant fractional distance bound $p$, existing schemes require decoding complexity that is exponential in $\vlen$ due to the complexity of decoding an error-correcting code. Our proposed schemes have $O(n\sqrt{n})$ decoding complexity, which is low enough for practical implementation even for long input sequences. For low distance fractions $p$, our scheme has low compression rates, although not as low as the prior schemes that do not consider the decoding complexity. We use codes with structure similar to {\em generalized concatenation} (GC) codes~\cite{BlokhZyablov:76,ZyablovV:99} -- in particular {\em generalized error-locating} (GEL) codes~\cite{ZyablovV:72,BossertBook:99}. Applying the GEL code concatenation for compression requires to combine inner coset codes with outer error-correcting codes, while in the known construction both inner and outer codes are error-correcting codes. Moreover, using the known decoding algorithms for GEL (and GC) codes~\cite{ZyablovV:75} results in total decoding complexity that is above quadratic in $\vlen$, thus we use lower-complexity decoders to get the desired $O(n\sqrt{n})$. Our results show that when the distance fractions $p$ are small, low compression rates are achieved, which thanks to the low complexity may offer an alternative to compression algorithms not using side information at all. If one lifts the requirement for guaranteed decoding, then existing work (e.g.~\cite{UyematsuT:01,SmithA:07}) using classical concatenation~\cite{ForneyD_book:66} can achieve lower compression rates. Uyematsu~\cite{UyematsuT:01} uses classical concatenation for Slepian-Wolf coding that succeeds with high probability over the source distribution, and Smith~\cite{SmithA:07} provides a scheme for compression with side information at the decoder that succeeds with high probability over the shared randomness between encoder and decoder (this capability is extended to Slepian-Wolf coding in~\cite{ChumbalovD:18}.)
 
The theoretical setups studied in this paper are general, and may find use in various data-rich distributed applications involving storage and communications. However, applications involving {\em DNA sequences} are a particular motivation for this study. DNA sequences are extremely long (hundreds of megabytes for full-genome sequences), and in emerging personal-medicine applications they are stored and communicated by various resource-limited entities. For DNA applications, the set $\repZ$ of candidate reference vectors in Section~\ref{sec:dmax} models similar sequences available in the sender's local storage. The scheme of Section~\ref{sec:low_complexity} with its low guaranteed-decoding complexity is motivated by the long lengths of DNA sequences, and the importance of their perfect reconstruction. Most current compression schemes for DNA sequences use a reference sequence {\em in the encoder} (see, e.g.,~\cite{BonfieldJ:13,ZhangY:15}), and are thus forced to use generic reference vectors with weak similarity to the compressed vector $\vy$. Freeing the encoder from having the reference vector allows compressing $\vy$ with a smaller distance parameter $p$, building on the many similar vectors the decoder has in its local storage. 

The advantage of applying the generalized-concatenation approach for compression is that different inner codes can in future work accommodate additional similarity measures, for example $\vy$ and $\vz$ differing by {\em insertions and deletions}.

\section{Problem Model and Definitions}\label{sec:model}
In the problem setup we consider, there is an input vector we wish to convey (transmit or store) under the assumption that the party requesting this vector has a ``similar'' vector as a side-information vector (also called reference vector in the sequel). ``Similar'' here refers to having a bounded Hamming distance from the input vector. A length-$\vlen$ vector $\vy$ is given as input to the {\em encoder}, which maps $\vy$ to a vector $\enc(\vy)$ such that the {\em decoder} will be able to perfectly reproduce $\vy$ from $\enc(\vy)$ given a vector $\vz$ that satisfies $d_H(\vy,\vz)\leq p \vlen$, where $0<p<1$ is a real-valued parameter and $d_H(\cdot,\cdot)$ is the standard Hamming distance between vectors. The vector $\vz$ at the decoder is {\em not} known to the encoder. An encoder+decoder pair is called a {\em coding scheme}. The objective is to find a coding scheme that minimizes $|\enc(\vy)|$, the number of bits in $\enc(\vy)$, where either the worst-case or average-case $|\enc(\vy)|$ will be of interest, and the average is taken with respect to the randomization used by the algorithms without assuming any probability distribution on $\vy$. In both the worst case and the average case the decoder must recover $\vy$ without error.

\subsection{New model: reference-vector \underline{set} known to encoder}\label{subsec:formal_def}
Let $\repZ=\{\vz_1,\ldots,\vz_M\}$ be a set of vectors, where each vector $\vz_i$ is a binary vector of length $\vlen$. The set $\repZ$ is known to the encoder, and it contains the reference vector $\vz$ available at the decoder. While the encoder knows $\repZ$, it does {\em not} know the specific $\vz$ that the decoder has. The situation that the encoder knows $\repZ$ (but not $\vz$) can be encountered in practice when the encoder has access to a large repository of reference vectors, some of which are available to the decoder (but not clear which exactly). 

\subsection{Structured reference vectors: the $p$-spread parameter}\label{subsec:struct_SI}
Throughout the paper we will generally consider the set $\repZ$ of reference vectors as general and arbitrary, and in particular not assumed to have any stochastic properties. One useful parameter to characterize $\repZ$ is $\dmax$ we define next.
\begin{definition}
Given a set $\repZ$ of reference vectors we define $\dmax$ as
\begin{equation} \dmax(\repZ) \triangleq \max_{\vz_i,\vz_j:d_H(\vz_i,\vz_j)\leq 2p\vlen}d_H(\vz_i,\vz_j).\label{eq:dmax}\end{equation}
In words, $\dmax(\repZ)$ is the maximal distance between a pair of vectors in $\repZ$ whose distance is at most $2p\vlen$.
\end{definition}
Note that for any $\repZ$ we have the upper bound $\dmax(\repZ)\leq 2p\vlen$. When this upper bound is strict, it means that the set $\repZ$ has a ``clustering'' property, where vectors that are in the same neighborhood (have distance $\leq 2p\vlen$) are not very far from each other (have distance $\leq \dmax<2p\vlen$). For convenience, we define the {\em $p$-spread parameter} $\pmax$ of $\repZ$ as
\begin{equation} \pmax(\repZ,p) \triangleq \frac{\dmax(\repZ)}{2n}.\label{eq:pmax}\end{equation}
Later in the paper we will omit the arguments $\repZ$ and $p$ that are clear from the context, and just use $\pmax$. With this notation we have the upper bound \[\pmax \leq p. \]

The definition of the $p$-spread parameter $p'$ introduces structure to the set  $\repZ$. When $p'=p$ the vectors in $\repZ$ can be arbitrary, while $p'<p$ implies that the vectors in $\repZ$ are more ``clustered'' in the sense that pairs are either close $d_H(\vz_i,\vz_j)\leq 2p'$ or far $d_H(\vz_i,\vz_j) > 2p$, with a forbidden distance range in between. The $p$-spread parameter is the simplest combinatorial way we have found to model vector clustering, which is an important feature in applications like DNA compression. It is important to note that the $p$-spread parameter does {\em not} degenerate $\repZ$ to disjoint clusters of vectors with $d_H(\vz_i,\vz_j)\leq 2p'$, as seen in the next example.
\begin{example}\label{ex:p_spread}
For $\vlen=7$, consider the following example of $\repZ$.
\begin{equation}\repZ = \{ 0000000,0111000,1110000,1111000,1111111 \}.\label{eq:p_spread}\end{equation}
When $p=3/7$, we see that $\pmax(\repZ,p)=2/7$, because any two vectors in $\repZ$ that are at distance $6$ or less are also at distance $4$ or less. The set $\repZ$ models that both subsets $\{0000000,0111000,1110000,1111000\}$ and $\{0111000,1110000,1111000,1111111\}$ (which overlap) have some degree of similarity expressed in being at distance at most $4$ from each other. Because $0000000$ and $1111111$ are not similar according to this definition, they must be dissimilar in the sense of being at distance more than $6$ from each other.
\end{example}

\subsection{Hamming balls and anticodes}
In our results we define the proximity between input and reference vectors using the Hamming metric. Hence the following definitions will be useful. We denote by $\Hbl{l}{\vx}$ the {\em Hamming ball} of radius $l$ around the vector $\vx$, that is, $\Hbl{l}{\vx}=\{\vs\in\{0,1\}^\vlen:d_H(\vs,\vx)\leq l\}$. The size (number of vectors) of the Hamming ball is denoted $|\Hbl{l}{\vx}|$, and because it does not depend on the argument $\vx$ we denote it $|\Hbln{l}|$. We will use a well-known combinatorial inequality
\[ \forall \alpha<1/2,~|\Hbln{\alpha\vlen}| \leq 2^{\vlen H(\alpha)},  \]
where $H(\alpha)\triangleq -\alpha\log_2(\alpha)-(1-\alpha)\log_2(1-\alpha)$ is the binary entropy function.

We also use the definition of an anticode. A set of vectors $S\subset \{0,1\}^n$ is called an {\em anticode} of diameter $l$ if any two vectors $\vs_1,\vs_2\in S$ satisfy $d_H(\vs_1,\vs_2)\leq l$.

\subsection{Random hash functions}
A central tool in our proofs is {\em random hash functions}. A hash function $u:\{0,1\}^n\rightarrow \{0,1\}^m$ is a mapping from vectors of $n$ bits to vectors of $m<n$ bits.  A random hash function is a function $u$ chosen randomly and uniformly from the set of hash functions $U_m=\{u:\{0,1\}^n\rightarrow \{0,1\}^m\}$, such that $\forall \vs,\vx\in\{0,1\}^n,\vs\neq \vx:Pr[u(\vs)=u(\vx)]\leq 1/2^m$. If this property is satisfied by a sub-class $\bar{U}_m\subseteq U_m$ under uniform sampling, than $\bar{U}_m$ is called a {\em universal} class of hash functions~\cite{CarterJ:79}. An immediate fact about random hash functions from $U_m$ or from any universal sub-class $\bar{U}_m$ is that for any set of vectors $S\subset \{0,1\}^n$ with $|S|=s$ and a vector $\vx\notin S$, we have $Pr[\exists \vs\in S: u(\vs)=u(\vx)]\leq s/2^m$, which follows from the union bound. Note that this probability bound does not assume any probability distribution on the vectors $\vx,S$. 
\section{Compression rate vs. $p$-spread parameter}\label{sec:dmax}
In this section we seek coding schemes that given a parameter $p$ encode $\vy$ while knowing $\repZ$; the $p$-spread parameter $p'$ is known to the encoder from $\repZ$ and $p$. We investigate how the compression rate $|\enc(\vy,\repZ)|/\vlen$ depends on $p'$. We seek coding schemes that guarantee the reconstruction of any $\vy$ {\em without error}. The achievable compression rates we derive are given as average over the shared randomness between encoder and decoder, but we emphasize that unlike similar results in information theory, we do not allow any (even vanishing) decoding error, and we do not assume any stochastic model for $\vy$ or $\repZ$. \ycr{Formally, our coding schemes in this section operate over the following coding model.}
\ycr{
\begin{definition}\label{def:zero_error_av}
A coding scheme with parameters $p$, $p'$ has \textbf{zero-error average-rate} $R$ if for any choice of $\vy$ and $\repZ$ with $\pmax(\repZ,p)=p'$, $\vy$ can be uniquely recovered from $\enc(\vy,\repZ)$ and any $\vz\in\repZ$ s.t. $d_H(\vy,\vz)\leq p \vlen$, and $|\enc(\vy,\repZ)|/\vlen=R$ on average over randomness shared by the encoder and decoder.
\end{definition}
}
\ycrr{A useful subclass of Definition~\ref{def:zero_error_av} is {\em simple-hashing} zero-error average-rate coding schemes, which we define next. 

\begin{definition}\label{def:simple_hashing}
A zero-error average-rate coding scheme is called a \textbf{simple-hashing scheme} if with probability tending to $1$ (as $\vlen\rightarrow \infty$) it encodes $\vy$ as $u(\vy)$ such that for all $\vz\in\repZ$ and $\vy'\in\Hbl{p\vlen}{\vz}$, we have $u(\vy')\neq u(\vy)$ unless $\vy'=\vy$. The probability is taken over the drawings of $u(\boldsymbol{\cdot})\in U_m$, where $m$ is fixed given $p$, $p'$, $\vlen$.    
\end{definition}
Note that a simple-hashing scheme is free to encode $\vy$ arbitrarily with some (vanishing) probability, such that for every input it maintains the zero-error property of Definition~\ref{def:zero_error_av}.  
}
\subsection{Achievable rate with random hashing}\label{subsec:rand_bin}
In the first result we show a scheme in which the compression rate can be bounded by a simple function of $p$ and the $p$-spread parameter of the set of reference vectors $\repZ$.
\begin{theorem}\label{th:hash_y}
\ycrr{Given the parameters $p$ and $p'$,} there exists a \ycrr{simple-hashing} zero-error average-rate coding scheme with
\begin{equation} \lim_{\vlen\rightarrow \infty} \frac{|\enc(\vy,\repZ)|}{\vlen}\leq H(p) + H(p') + \epsilon, \label{eq:hash_y}\end{equation}
and $\epsilon >0$ is an arbitrary small real constant.
\end{theorem}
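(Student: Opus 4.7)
The plan is to use a universal hash function with $m = \lceil \vlen(H(p)+H(\pmax)+\epsilon)\rceil$ output bits, and show that with high probability over the choice of hash, no two vectors in a certain ``dangerous'' set collide under the hash. A simple-hashing scheme then encodes $\vy$ as $u(\vy)$ (with some short fallback encoding for the vanishing-probability failure event), so the average rate is essentially $H(p)+H(\pmax)+\epsilon$.

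First I would identify which $\vy'\neq \vy$ can confuse the decoder. For the scheme to produce zero-error output on every admissible $\vz$, I need that for every $\vz\in \repZ$ with $d_H(\vy,\vz)\leq p\vlen$ and every $\vy'\in \Hbl{p\vlen}{\vz}\setminus\{\vy\}$, one has $u(\vy')\neq u(\vy)$. Let
\[ D \;\triangleq\; \bigcup_{\vz\in\repZ:\,d_H(\vy,\vz)\leq p\vlen}\Hbl{p\vlen}{\vz}\;\setminus\;\{\vy\}. \]
The scheme succeeds if $u$ separates $\vy$ from every element of $D$.

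The main step (and the crux of the argument) is bounding $|D|$ using the $p$-spread property. Let $A \triangleq \repZ\cap \Hbl{p\vlen}{\vy}$. For any two $\vz_i,\vz_j\in A$, the triangle inequality gives $d_H(\vz_i,\vz_j)\leq 2p\vlen$. By the definition of $\pmax$ in (\ref{eq:dmax})--(\ref{eq:pmax}), this forces $d_H(\vz_i,\vz_j)\leq 2\pmax \vlen$, so $A$ is an anticode of diameter $2\pmax\vlen$. By the Kleitman anticode bound (equivalently, by embedding $A$ in a Hamming ball of radius $\pmax\vlen$ around its ``center''), $|A|\leq |\Hbln{\pmax \vlen}|\leq 2^{\vlen H(\pmax)}$. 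Since $D\subseteq \bigcup_{\vz\in A}\Hbl{p\vlen}{\vz}$, we get
\[ |D|\;\leq\;|A|\cdot |\Hbln{p\vlen}|\;\leq\;2^{\vlen(H(p)+H(\pmax))}. \]

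Next I would apply a universal hash family $\bar U_m\subseteq U_m$ with $m=\lceil\vlen(H(p)+H(\pmax)+\epsilon)\rceil$. By the union bound property of universal hashing,
\[ \Pr_{u}\!\left[\exists \vy'\in D:\,u(\vy')=u(\vy)\right]\;\leq\;\frac{|D|}{2^m}\;\leq\;2^{-\vlen\epsilon}\;\xrightarrow[\vlen\to\infty]{}\;0. \]
So a random $u$ from the family produces a valid simple hash (in the sense of Definition~\ref{def:simple_hashing}) with probability $1-o(1)$.

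Finally I would handle the (vanishing probability) failure event to preserve zero-error reconstruction, without hurting the average rate. The encoder first tests whether the drawn $u$ separates $\vy$ from $D$ (it can compute $D$ from $\repZ$, $\vy$, $p$); if yes it transmits a flag bit $0$ followed by $u(\vy)$, else it transmits a flag bit $1$ followed by $\vy$ itself. The decoder acts accordingly; in the successful case it recovers $\vy$ as the unique vector $\vy'\in \Hbl{p\vlen}{\vz}$ with $u(\vy')=u(\vy)$. Averaging over $u$, the expected length is at most $1+(1-o(1))\cdot m + o(1)\cdot \vlen$, yielding $\limsup_{\vlen\to\infty}|\enc(\vy,\repZ)|/\vlen\leq H(p)+H(\pmax)+\epsilon$. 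The hardest step is the anticode bound on $|A|$; everything else is a standard random-hashing union bound plus a fallback trick to retain worst-case zero error.
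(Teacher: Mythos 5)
Your proposal matches the paper's own proof step for step: the same set of potentially-confusing vectors $\cup_{\vz_i\in\repZ(\vy,p)}\Hbl{p\vlen}{\vz_i}$, the same observation that $\repZ\cap\Hbl{p\vlen}{\vy}$ is an anticode of diameter $2p'\vlen$ whose size is bounded by $|\Hbln{p'\vlen}|$ (the paper cites Ahlswede--Khachatrian, you cite Kleitman; both give this bound), the same universal-hash union bound giving failure probability $\leq 2^{-\vlen\epsilon}$, and the same flag-bit fallback to keep zero error while preserving the average rate. One small correction: your parenthetical claim that such an anticode can be \emph{embedded} in a Hamming ball of radius $p'\vlen$ is false in general (e.g.\ the even-weight code $\{000,110,101,011\}\subset\{0,1\}^3$ has diameter $2$ but lies in no radius-$1$ ball); this is harmless here since only the cardinality bound is used, but the diametric theorem gives a size bound, not a containment.
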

Before presenting the proof, we specify the encoder and decoder of the proposed coding scheme. The encoder and decoder share a random hash function from $U_m$ (e.g., by sharing random bits independent of the input), \ycrr{where $m$ is fixed and equal to $\vlen$ times the right-hand side of~\eqref{eq:hash_y}}. The scheme in fact works with any universal subclass of $U_m$, which by using known universal classes with structure can significantly reduce the number of bits shared by the encoder and decoder. In the following we use the definition
\[\repZ(\vx,\alpha) \triangleq \repZ \cap \Hbl{\alpha\vlen}{\vx}, \]
which is the set of reference vectors that are within distance $\alpha n$ from $\vx$.\\

\begin{cnst}\label{cnst:hash_y}
Let $u(\boldsymbol{\cdot})$ be a random hash function from $U_m$, where $m=\vlen[H(p) + H(p') + \epsilon]$. \\
\textbf{Encoder}: 1) List all reference vectors in $\repZ(\vy,p)$. 2) For each $\vz_i\in \repZ(\vy,p)$ apply the hash function $u$ on all vectors in $\Hbl{p\vlen}{\vz_i}$. In other words, apply $u$ on all vectors in $\cup_{\vz_i\in \repZ(\vy,p)}\Hbl{p\vlen}{\vz_i}$. 3) If no vector in these Hamming balls except $\vy$ is hashed to $u(\vy)$, output the bit $0$ followed by $u(\vy)$; otherwise output the bit $1$ followed by $\vy$.\\
\textbf{Decoder}: 1) If first bit is $1$, output the received $\vy$. If first bit is $0$, apply the hash function $u$ on all vectors in $\Hbl{p\vlen}{\vz}$ and output the unique vector whose hash equals the received $u(\vy)$.
\end{cnst}
\begin{proof}
Given $\vy$, by the problem statement the reference vector $\vz$ at the decoder satisfies $d_H(\vy,\vz)\leq p\vlen$. The encoder can list all vectors $\vz_i\in\repZ$ that satisfy $d_H(\vy,\vz_i)\leq p\vlen$. From the triangle inequality we get that if $\vz_i$ and $\vz_j$ are each at distance at most $p\vlen$ from $\vy$, then $d_H(\vz_i,\vz_j)\leq 2p\vlen$. From the $p$-spread parameter of $\repZ$ it follows that $d_H(\vz_i,\vz_j)\leq 2p'\vlen$. Hence the list of potential $\vz$ vectors given $\vy$ is an anticode with diameter $2p'\vlen$. It is known that the maximal size of an anticode with diameter $2p'\vlen$ is $|\Hbln{p'\vlen}|$~\cite{AhlswedeR:98}. Hence the set of vectors $\cup_{\vz_i}\Hbl{p\vlen}{\vz_i}$ hashed by the encoder has size bounded from above by $|\Hbln{p'\vlen}|\cdot |\Hbln{p\vlen}|\leq 2^{n[H(p')+H(p)]}$. From the properties of random hash functions, the probability that a vector in the set except $\vy$ will hash to $u(\vy)$ is at most $2^{-n\epsilon}$, going to zero as $\vlen$ grows. Hence the fraction of instances where the encoder outputs $u(\vy)$ tends to $1$. This gives    $|\enc(\vy)|\rightarrow m = \vlen\left[H(p) + H(p') + \epsilon \right]$ as $\vlen$ tends to infinity.
\end{proof}

The implication of Theorem~\ref{th:hash_y} is that knowing the set $\repZ$ at the encoder can improve the compression rate over known schemes when $p'<p$. For comparison, the scheme in~\cite{OrlitskyA:03} (which implicitly assumes the trivial $\repZ=\{0,1\}^{\vlen}$) gives $|\enc(\vy)|=nH(2p)$ with Gilbert-Varshamov non-explicit codes. Whenever $H(p) + H(p')<H(2p)$, Construction~\ref{cnst:hash_y} offers a better compression rate. \ycr{Note that Construction~\ref{cnst:hash_y} indeed fulfills the {\em zero-error average-rate} property of Definition~\ref{def:zero_error_av}: the average rate is bounded by \eqref{eq:hash_y} for the worst-case $\repZ$ given any $\vy$, and for any $\vz$ at the decoder.} \ycrr{Moreover, it is also a {\em simple-hashing scheme} because a fixed-$m$ $u(\cdot)$ provides unique decoding with probability tending to $1$.}\\
In practice, $\repZ$ may consist of reference vectors that are more ``favorable'' for compression than the cardinality upper bounds taken in the proof of Theorem~\ref{th:hash_y}. That means the benefits of knowing $\repZ$ at the encoder exceed the tighter compression-rate upper bounds presented in this paper.

\subsection{A converse result for random hashing}
The scheme of Construction~\ref{cnst:hash_y} encodes the input by random hashing of the vector $\vy$. \ycr{The next result shows that \ycrr{simple-hashing} zero-error average-rate coding schemes are subject to a fundamental lower bound on $|\enc(\vy,\repZ)|$.}
\begin{theorem}\label{th:lower_binning}
\ycr{Given the parameters $p$ and $p'$, any \ycrr{simple-hashing} zero-error average-rate coding scheme must have
\begin{equation}  \lim_{\vlen\rightarrow \infty} \frac{|\enc(\vy,\repZ)|}{\vlen} \geq H(p'+p). \label{eq:size_enc_bin_lb}\end{equation}}
\end{theorem}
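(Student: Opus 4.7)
The plan is to exhibit a single worst-case instance $(\vy,\repZ)$ with $\pmax(\repZ,p)=p'$ for which the simple-hashing property of Definition~\ref{def:simple_hashing} forces the hash length $m$ to satisfy $m\geq\vlen H(p+p')-o(\vlen)$. Since the average rate of a simple-hashing scheme tends to $m/\vlen$, dividing by $\vlen$ and taking limits then yields \eqref{eq:size_enc_bin_lb}.

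I would take $\vy=\vzero$ and $\repZ=\Hbl{p'\vlen}{\vzero}$, the entire Hamming ball of radius $p'\vlen$ around the origin. Any two vectors in $\repZ$ have mutual distance at most $2p'\vlen$, and equality is attained by two weight-$p'\vlen$ vectors with disjoint supports, so $\dmax(\repZ)=2p'\vlen$ and $\pmax(\repZ,p)=p'$ as required.

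Next I would identify the ``confusable set'' that $u$ must separate from $\vy$, namely $S\triangleq\bigcup_{\vz\in\repZ}\Hbl{p\vlen}{\vz}\setminus\{\vy\}$. The inclusion $S\subseteq\Hbl{(p+p')\vlen}{\vzero}\setminus\{\vzero\}$ is immediate from the triangle inequality. For the reverse inclusion, given any $\vy'$ of Hamming weight $r\leq(p+p')\vlen$, either $r\leq p\vlen$ (and $\vz=\vzero\in\repZ$ suffices) or $p\vlen<r\leq(p+p')\vlen$, in which case retaining any $r-p\vlen$ of the one-positions of $\vy'$ produces a $\vz\in\Hbl{p'\vlen}{\vzero}=\repZ$ with $d_H(\vy',\vz)=p\vlen$. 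Consequently $|S|=|\Hbln{(p+p')\vlen}|-1\geq 2^{\vlen H(p+p')-o(\vlen)}$ by the standard entropy estimate (the relevant regime is $p+p'<1/2$; the complementary regime only strengthens the conclusion).

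Finally, because $u$ is drawn uniformly from $U_m$, the events $\{u(\vy)=u(\vy')\}_{\vy'\in S}$ are mutually independent conditional on $u(\vy)$, so
\[\Pr_{u}\bigl[\,u(\vy)\neq u(\vy')\ \forall\,\vy'\in S\,\bigr]=(1-2^{-m})^{|S|}.\]
Definition~\ref{def:simple_hashing} requires this probability to tend to $1$, which forces $|S|/2^m\to 0$, hence $m\geq\log_2|S|+\omega(1)\geq\vlen H(p+p')-o(\vlen)$. Dividing by $\vlen$ and sending $\vlen\to\infty$ completes the argument. The most delicate point to handle cleanly is the transition from the uniform class $U_m$ to a general universal sub-class $\bar{U}_m\subseteq U_m$: there the collision events need not be independent, so the closed-form probability must be replaced by a lower bound obtained from only the pairwise-collision guarantee (e.g.\ via a short second-moment or inclusion–exclusion estimate), but this does not change the asymptotic conclusion.
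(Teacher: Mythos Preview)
Your proof is correct and follows essentially the same approach as the paper: both use the adversarial instance $\repZ=\Hbl{p'\vlen}{\vy}$, identify the confusable set as $\Hbl{(p+p')\vlen}{\vy}$, and compute the no-collision probability $(1-2^{-m})^{|S|}$ for a uniformly random $u\in U_m$ to force $m\ge \vlen H(p+p')-o(\vlen)$. Your closing caveat about universal sub-classes $\bar{U}_m$ is unnecessary here, since Definition~\ref{def:simple_hashing} fixes the randomness to be uniform over the full class $U_m$; otherwise your write-up is, if anything, more explicit than the paper's (you verify $\pmax(\repZ,p)=p'$ and the exact equality $S=\Hbl{(p+p')\vlen}{\vzero}\setminus\{\vzero\}$, which the paper leaves implicit).
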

\begin{proof}\ycr{First, since $m$ is fixed and $u(\vy)$ is the encoder output with probability tending to $1$, \eqref{eq:size_enc_bin_lb} is equivalent to the condition 
\begin{equation}  \lim_{\vlen\rightarrow \infty} \frac{m}{\vlen} \geq H(p'+p). \label{eq:size_hash_bin_lb}\end{equation}
\ycrr{By Definition~\ref{def:simple_hashing},} the encoder can output $u(\vy)$ only when there is no $\vy'\neq \vy$ within distance $p\vlen$ from $\vz\in\repZ$ such that $u(\vy')=u(\vy)$. In the proof we show that the probability over the functions $u(\boldsymbol{\cdot})\in U_m$ that no such $\vy'$ exists is vanishing with $\vlen$ if $m$ does not satisfy \eqref{eq:size_hash_bin_lb}. Given $\vy$, an adversary sets $\repZ=\Hbl{p'\vlen}{\vy}$ and examines all the vectors $\vy'\in \Hbl{(p'+p)\vlen}{\vy}$ and their hash values $u(\vy')$. If there exists a $\vy'$ with $u(\vy')=u(\vy)$, the adversary sets $\vz$ to be a vector in $\repZ=\Hbl{p'\vlen}{\vy}$ that is within distance $pn$ from $\vy'$; such a vector exists because $\vy'\in \Hbl{(p'+p)\vlen}{\vy}$, and as a result both $\vy$,$\vy'$ are within distance $p\vlen$ from $\vz\in\repZ$, as required. Asymptotically there are $\ssz\triangleq 2^{\vlen H(p'+p)}$ potential $\vy'$ vectors in $\Hbl{(p'+p)\vlen}{\vy}$. Denote $\rat=H(p'+p)$, and assume that $m$ violates~\eqref{eq:size_hash_bin_lb}, thus $\lim_{\vlen \rightarrow \infty}\vlen\rat-m=\infty$. Going over all the functions in $U_m$, there are $(2^m)^\ssz$ mappings from the vectors in $\Hbl{(p'+p)\vlen}{\vy}$ to the $2^m$ hash values. Out of these, there are $(2^m-1)^\ssz$ mappings in which all hash values are different from $u(\vy)$, which allow the encoder to successfully output $u(\vy)$. Taking the ratio between the   number of successful mappings and the total number of mappings, we get 

\begin{equation}
\frac{(2^m-1)^\ssz}{(2^m)^\ssz} = \left[1-2^{-m} \right]^{\ssz} = \left[1-2^{-m} \right]^{2^{\vlen \rat}} = \left(\left[1-2^{-m} \right]^{2^{m}}\right)^{2^{\vlen\rat - m}} \underset{\vlen\rightarrow \infty}{\longrightarrow} e^{-2^{\lim_{\vlen \rightarrow \infty} (\vlen\rat - m)}} \longrightarrow 0.
\end{equation}

Since the fraction of successful mappings of $\Hbl{(p'+p)\vlen}{\vy}$ is vanishing with $\vlen$, and uniformly drawing $u(\boldsymbol{\cdot})\in U_m$ induces a uniform distribution on these mappings, we proved that \eqref{eq:size_hash_bin_lb} is necessary to output $u(\vy)$ with non-vanishing probability, and \eqref{eq:size_enc_bin_lb} is necessary to output $u(\vy)$ with probability tending to $1$.}
 
\end{proof}

The gap between $H(p'+p)$ (Theorem~\ref{th:lower_binning}) and $H(p')+H(p)$ (Theorem~\ref{th:hash_y}) leaves room to potentially improve over Construction~\ref{cnst:hash_y} while still using \ycrr{simple} hashing. \ycrr{It is also possible that \eqref{eq:size_enc_bin_lb} can be improved by schemes that allow having $\vy$ and $\vy'$ with the same hash value, while finding a decoder that can somehow distinguish between the two hypotheses.}

\subsection{Reference-based coding}
The random-hashing scheme of Section~\ref{subsec:rand_bin} is attractive thanks to its simplicity. However, when the decoder knows the near neighborhood of its reference vector $\vz$ in $\repZ$, the following coding scheme may achieve smaller values of $|\enc(\vy,\repZ)|$. The idea of the next Construction~\ref{cnst:hash_z} is that hashing is done {\em not} on the input $\vy$, but on the reference vector in $\repZ$ nearest to $\vy$, which is used to encode $\vy$ along with a low-weight difference vector.
\begin{cnst}\label{cnst:hash_z}
Let $u(\boldsymbol{\cdot})$ be a random hash function from $U_m$, where $m=\vlen[2H(p') + \epsilon]$.\\
\textbf{Encoder}: 1) List all reference vectors in $\repZ(\vy,p)$. 2) Find in the list the vector nearest to $\vy$, denote it $\vz_1$ and define $d\triangleq d_H(\vy,\vz_1)$ and $\vv_1\triangleq \vy-\vz_1$ ($\vv_1$ is the difference vector between $\vy$ and $\vz_1$.) 3) For each $\vz_i\in \repZ(\vy,p)$ apply the hash function $u$ on all vectors $\vz_j\in \repZ(\vz_i,2p')$ such that $\vz_j+\vv_1\in \Hbl{p\vlen}{\vz_i}$. In other words, apply $u$ on all vectors in $\cup_{\vz_i\in \repZ(\vy,p)} [\repZ(\vz_i,2p')\cap \Hbl{p\vlen}{\vz_i-\vv_1}]$. 4) If none of these vectors except $\vz_1$ is hashed to $u(\vz_1)$, output the bit $0$ followed by $[u(\vz_1),\enum{d}{\vv_1}]$, where $\enum{d}{\vv_1}$ is the index of $\vv_1$ in an enumeration of $\Hbl{d}{\vzero}$ using $\vlen H\left(\frac{d}{\vlen}\right)$ bits; otherwise output the bit $1$ followed by $\vy$.\\
\textbf{Decoder}: 1) If first bit is $1$, output the received $\vy$. If first bit is $0$, apply the hash function $u$ on all vectors in $\repZ(\vz,2p')\cap \Hbl{p\vlen}{\vz-\vv_1}$, and for the unique vector $\vz_1$ whose hash equals $u(\vz_1)$, output $\vz_1+\vv_1$.
\end{cnst}
With the scheme in Construction~\ref{cnst:hash_z} we get the following result, obtained under the same assumptions of Theorem~\ref{th:hash_y}, that is, asymptotically as $\vlen\rightarrow \infty$ and on average over the random hash functions $u$.
\begin{theorem}\label{th:hash_z}
Let $\repZ$ be a set of reference vectors with $p$-spread parameter $p'$. Then there exists a zero-error average-rate coding scheme with
\begin{equation} \lim_{\vlen\rightarrow \infty}  \frac{|\enc(\vy,\repZ)|}{\vlen}\leq 2H(p') + \epsilon + H\left(\delta_H(\vy,Z)\right), \label{eq:size_hash_z}\end{equation}
where $\epsilon >0$ is an arbitrary small real constant and $\delta_H(\vy,Z)= d_H(\vy,Z)/\vlen$ is the fractional distance between $\vy$ and the nearest vector in $\repZ$.
\end{theorem}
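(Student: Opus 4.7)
The plan is to mimic the random-hashing argument of Theorem~\ref{th:hash_y}, but now to hash the chosen reference vector $\vz_1$ instead of the input $\vy$. This replaces one factor of $H(p)$ by $H(p')$, at the cost of transmitting $\vv_1=\vy-\vz_1$ by enumeration, which contributes $H(\delta_H(\vy,Z))$ to the rate.

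The central step will be to bound the size of the set
\[ S \;=\; \bigcup_{\vz_i\in\repZ(\vy,p)} \bigl[\repZ(\vz_i,2p') \cap \Hbl{p\vlen}{\vz_i-\vv_1}\bigr] \]
hashed by the encoder. The outer index $\repZ(\vy,p)$ is, exactly as in the proof of Theorem~\ref{th:hash_y}, an anticode of diameter $2p'\vlen$ (triangle inequality through $\vy$ followed by $p$-spread), so it has at most $|\Hbln{p'\vlen}| \le 2^{\vlen H(p')}$ elements. For each such $\vz_i$ the inner set $\repZ(\vz_i,2p')$ is also an anticode of diameter $2p'\vlen$: any two of its members lie in $\repZ$ and are each within $2p'\vlen$ of $\vz_i$, so their pairwise distance is at most $4p'\vlen \le 2p\vlen$, at which point $p$-spread collapses it to $2p'\vlen$. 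Multiplying gives $|S| \le 2^{2\vlen H(p')}$.

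With $m = \vlen[2H(p')+\epsilon]$ and the universal-hash union bound, the probability that some $\vz_j \in S\setminus\{\vz_1\}$ collides with $\vz_1$ under $u$ is at most $|S|/2^m \le 2^{-\vlen\epsilon}$, which vanishes as $\vlen\to\infty$. On the complementary event the encoder transmits $0 \,\|\, u(\vz_1) \,\|\, \enum{d}{\vv_1}$, using $1 + m + \vlen H(d/\vlen)$ bits; on the vanishing event it falls back to $\vlen+1$ bits. Averaging and dividing by $\vlen$ yields the claimed bound~\eqref{eq:size_hash_z}. Correctness of the decoder then follows from noting that the decoder's search set $\repZ(\vz,2p') \cap \Hbl{p\vlen}{\vz-\vv_1}$ contains $\vz_1$ (since $d_H(\vz_1,\vz)\le 2p'\vlen$ by $p$-spread on $\vz_1,\vz\in\repZ(\vy,p)$, and $d_H(\vz_1,\vz-\vv_1)=d_H(\vy,\vz)\le p\vlen$) and is contained in $S$, so the encoder's collision check singles out $\vz_1$.

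The step I expect to be the main obstacle is the anticode claim for each inner set $\repZ(\vz_i,2p')$, which invokes $p$-spread on pairs at distance up to $4p'\vlen$ and thus really wants $4p'\vlen \le 2p\vlen$, i.e.\ $p'\le p/2$. For the regime $p' \in (p/2,p]$ the $p$-spread property does not immediately collapse the inner anticode; a tighter route there would be to show directly that $S\subseteq \repZ(\vz_1,2p')$ by chaining $d_H(\vz_j,\vz_1)\le d_H(\vz_j,\vz_i-\vv_1)+d_H(\vz_i-\vv_1,\vz_1)\le 2p\vlen$ and then applying $p$-spread to the single pair $\vz_j,\vz_1\in\repZ$, which actually yields the stronger bound $|S|\le 2^{\vlen H(p')}$ in that sub-regime.
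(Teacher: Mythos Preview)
Your overall architecture matches the paper's proof: bound $|\repZ(\vy,p)|$ by $2^{\vlen H(p')}$ via the anticode argument, bound each inner set by $2^{\vlen H(p')}$, multiply, hash with $m=\vlen[2H(p')+\epsilon]$, and append an enumeration of $\vv_1$. The decoder-correctness check you give is also the right one.

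The gap is precisely where you flagged it, and your proposed patch does not close it. For the inner set you bound $\repZ(\vz_i,2p')$ as an anticode by passing pairwise distances through $\vz_i$, which yields $4p'\vlen$ and only collapses via $p$-spread when $p'\le p/2$. Your alternative route for $p'\in(p/2,p]$ shows that every $\vz_j\in S$ satisfies $d_H(\vz_j,\vz_1)\le 2p'\vlen$, hence $S\subseteq \Hbl{2p'\vlen}{\vz_1}$; but that only gives $|S|\le 2^{\vlen H(2p')}$, not $2^{\vlen H(p')}$. To get the anticode bound $|\Hbln{p'\vlen}|$ you would need \emph{pairwise} distances in $S$ to be at most $2p'\vlen$, and your chaining only controls distances to the single anchor $\vz_1$.

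The paper's fix is a one-line change in which constraint you use for the inner bound: instead of $\repZ(\vz_i,2p')$, use the ball $\Hbl{p\vlen}{\vz_i-\vv_1}$ that appears in the intersection. Any two $\vz_j,\vz_{j'}$ with $\vz_j+\vv_1,\vz_{j'}+\vv_1\in\Hbl{p\vlen}{\vz_i}$ lie in $\Hbl{p\vlen}{\vz_i-\vv_1}$, so $d_H(\vz_j,\vz_{j'})\le 2p\vlen$ by the triangle inequality through $\vz_i-\vv_1$; now $p$-spread gives $d_H(\vz_j,\vz_{j'})\le 2p'\vlen$ for \emph{every} $p'\le p$, and the inner set is an anticode of diameter $2p'\vlen$, hence of size at most $|\Hbln{p'\vlen}|\le 2^{\vlen H(p')}$. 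This removes the case split entirely.
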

\begin{proof}
We first note that the vectors $\vz_i\in \repZ(\vy,p)$ in part 3 of the encoder are all possible $\vz$ vectors at the decoder. In the proof of Theorem~\ref{th:hash_y} we already saw that there are at most $2^{\vlen H(p')}$ such vectors. Now for each $\vz_i$ considered as a possible $\vz$ vector at the decoder, the decoder does not know $\vz_1$, but knows that it is some $\vz_j\in \repZ(\vz_i,2p')$ (because both $\vz_i$,$\vz_1$ are at distance at most $p\vlen$ from $\vy$). We prove that for each $\vz_i\in Z(\vy,p)$ there are at most $2^{\vlen H(p')}$ vectors $\vz_j\in \repZ(\vz_i,2p')$ such that $\vz_j+\vv_1\in \Hbl{p\vlen}{\vz_i}$ (part 3 in the encoder); the latter property is required for $\vz_j$ to be consistent with $\vz_i$ at the decoder. To get this bound, observe that any pair $\vz_j,\vz_{j'}$ that both satisfy $\vz_j+\vv_1,\vz_{j'}+\vv_1\in \Hbl{p\vlen}{\vz_i}$ also satisfy $d_H(\vz_j,\vz_{j'})\leq 2p\vlen$, because both are in $\Hbl{p\vlen}{\vz_i-\vv_1}$. From the $p$-spread parameter this implies $d_H(\vz_j,\vz_{j'})\leq 2p'\vlen$. Now with the same argument as in the proof of Theorem~\ref{th:hash_y}, we upper bound by $|\Hbln{p'\vlen}|\leq 2^{\vlen H(p')}$ the number of vectors hashed in part 3 of the encoder for each $\vz_i\in Z(\vy,p)$. Having bounded by $2^{2\vlen H(p')}$ the union over all $\vz_i$ of $\vz_j$ vectors that may confuse the decoder given $\vz=\vz_i$, we conclude that a hash function with $\vlen (2H(p')+\epsilon)$ output bits is sufficient with probability $1-2^{-\vlen\epsilon}$ that tends to $1$. To complete the proof, we add to the encoder output an enumeration of the difference vector $\vv_1$, which can be done with $\vlen H\left(\delta_H(\vy,Z)\right)$ bits according to~\cite{CoverT:73}.

\end{proof}
\textbf{Discussion:} If $\vy$ is relatively close in Hamming distance to {\em any} vector in $\repZ$ (in particular not necessarily the $\vz$ at the decoder), then Construction~\ref{cnst:hash_z} allows to reduce the fractional encoding size from the $H(p) + H(p')$ of Theorem~\ref{th:hash_y} closer to $2H(p')$ in the first term of~\eqref{eq:size_hash_z}. In the worst case $\delta_H(\vy,Z)$ equals $p$, and then~\eqref{eq:size_hash_z} becomes $2H(p')+H(p)$, which is not competitive with the upper bound offered by Construction~\ref{cnst:hash_y}. However, with ``rich'' $\repZ$ sets many times the input $\vy$ would have a much closer $\vz_1$ vector. We have not been able to derive a converse result for reference-based coding. The core difficulty is to bound the advantage from the encoder's freedom to choose the reference vector in $\repZ$ (we do know how to get lower bounds when the encoder always uses the nearest vector as reference, like in Construction~\ref{cnst:hash_z}). \\
We add that it is easy to combine Constructions~\ref{cnst:hash_z} and~\ref{cnst:hash_y} such that the encoder chooses to hash $\vz_1$ when one is close to $\vy$, and $\vy$ itself when its near neighborhood in $\repZ$ is empty. This combination will only require another bit to mark to the decoder which of the constructions is used for each $\vy$.


\section{Fixed-Rate Compression with Low Complexity}\label{sec:low_complexity}
In addition to this paper's focus on having no statistical assumptions on the source and side information, in this section we aim to get schemes with {\em guaranteed worst-case} compression rates, and not just average rates with random hashing as in Section~\ref{sec:dmax}. We also return here to the more classical setup where the encoder does not have a list of possible reference vectors, so its knowledge is limited to the fact that the decoder's $\vz$ vector is at distance at most $p\vlen$ from the input $\vy$. In the terminology of Section~\ref{sec:model} we thus have $\repZ=\{0,1\}^{\vlen}$, and $\pmax(\repZ,p)=p$ (trivial $p$-spread parameter). This problem is classical and well studied, but our proposed schemes will allow to solve it efficiently even for long sequences, for example DNA sequences.
\subsection{Background: a known guaranteed fixed-rate scheme}
For the setup of compressing a length-$\vlen$ vector $\vy$ with an unknown $\vz\in\Hbl{p\vlen}{\vy}$ at the decoder,~\cite{OrlitskyA:03} proposed a coset-coding approach, where a length-$\vlen$ binary linear code with minimum distance $>2p\vlen$ is taken and used as follows.
\begin{cnst}\label{cnst:orlit}
\cite{OrlitskyA:03} Let $\code$ be a binary linear code with minimum distance $>2p\vlen$, and $\mS$ be a $\rho\vlen\times\vlen$ parity-check matrix for $\code$, $\rho\in(0,1)$.\\
\textbf{Encoder}: Given an input row vector $\vy$, calculate $\vs=\mS\vy^T$, and output $\vs$.\\
\textbf{Decoder}: Find the lowest-weight vector $\vv$ such that $\mS\vv^T=\vs+\mS\vz^T$; output $\vz+\vv$.
\end{cnst}
The output vector $\tilde{\vy}=\vz+\vv$ satisfies $\mS\tilde{\vy}^T=\vs$, like $\vy$, and having more than one such vector in $\Hbl{p\vlen}{\vz}$ would violate the minimum distance of $\code$. Hence $\tilde{\vy}=\vy$. This construction is guaranteed to succeed in recovering $\vy$ so long that indeed $d_H(\vz,\vy)\leq p\vlen$ as specified. In terms of complexity, the encoder of Construction~\ref{cnst:orlit} performs a matrix-vector product, with $\rho \vlen^2$ bit operations. The decoding complexity is much higher (equivalent to maximum-likelihood decoding of an error-correcting code); even if polynomial-time sub-optimal decoding is used, decoding complexity may be prohibitive for the values of $\vlen$ typical in applications like DNA sequences. Because of that issue, in the remainder of the section we develop guaranteed-decoding constructions that reduce decoding complexity by encoding the long sequence into a codeword composed of shorter sub-block codewords.

\subsection{Construction idea}
Our low-complexity constructions are based on the idea of {\em generalized concatenation (GC)}~\cite{BlokhZyablov:76}, adapted to the use of the codes for compression. As in GC, a long (length $\vlen$) binary vector is broken to much shorter (length $\blen$) sub-vectors, and non-binary outer codes encode a desired dependence among the sub-vectors. Different from GC, the encoder output is not a concatenated codeword, but only parity symbols of the outer codes. The key difference is that here for compression, the concatenation needs to design outer error-correcting codes for inner {\em coset} codes, and not inner error-correcting codes as usual. Moreover, to keep the decoding complexity below quadratic in $\vlen$, we design our codes with single-shot decoders for the outer codes. This is in contrast to the common use of GC constructions employing iterative decoders that decode up to half the minimum distance~\cite{ZyablovV:75}, building on the generalized minimum distance (GMD) method~\cite{ForneyD:66}. The particular sub-class of GC codes found useful here is {\em generalized error-locating (GEL)} codes~\cite{BossertBook:99}, because their construction through inner syndromes fits well the syndrome method of Construction~\ref{cnst:orlit}.

\subsection{First efficient construction}
We first define a partition of length-$\vlen$ vectors to $\vbr\triangleq\vlen/\blen$ sub-vectors of length $\blen$ each, where $\blen$ is some integer that divides $\vlen$. Thus for example $\vy=[\vy_1,\ldots,\vy_{\vbr}]$, where $,$ represents vector concatenation. Let $\{\mS^{(i)}\}_{i=1}^{m}$ be a set of binary matrices where $\mS^{(i)}$ has dimensions $r_i\times k$. For a sub-vector $\vy_j$ we further define the {\em partial $i$-th syndrome} as
\[ \vs_{j}^{(i)}=\mS^{(i)}\vy_j^T.  \]
$\vs_{j}^{(i)}$ is a column vector of dimension $r_i$. We take the matrices $\{\mS^{(i)}\}_{i=1}^{m}$ to be a nested set, meaning that for $i'>i$ the $r_i$ rows of $\mS^{(i)}$ appear in $\mS^{(i')}$ in concatenation with additional $r_{i'}-r_{i}$ rows. This implies that $r_i$ is increasing with $i$. When $\mS^{(i)}$ is seen as a parity-check matrix of a length-$\blen$ code $\code^{(i)}$, we denote its minimum distance by $d_i$. From the nesting property we know that $d_i$ is non-decreasing with $i$. We define the differential matrix $\dS^{(i)}$ to contain the rows in $\mS^{(i)}$ that do not appear in $\mS^{(i-1)}$, and the number of rows in $\dS^{(i)}$ is denoted $\dr_i\triangleq r_{i}-r_{i-1}$. For these definitions, $\mS^{(0)}$ is defined as the empty matrix, hence $\dS^{(1)}=\mS^{(1)}$ (and $\dr_1=r_1$). Define also $\mI_{b}$ as the identity matrix of order $b$. Our first concatenated construction now follows.


\begin{cnst}\label{cnst:concat}
Let $\{\code^{(i)}\}_{i=1}^{m}$ be a nested set of length-$\blen$ binary codes with parity-check matrices $\{\mS^{(i)}\}_{i=1}^{m}$ and minimum distances $\{d_i\}_{i=1}^{m}$. In addition, define the set $\{\mH^{(i)}\}_{i=2}^{m}$ where $\mH^{(i)}$ is a $\rho_i\times\vbr$ parity-check matrix over the finite field $F_{2^{\dr_i}}$ that defines a code with minimum distance $\delta_i$. Let $\enci^{(i)}:F_{2^{\dr_i}}^{\vbr}\rightarrow F_{2^{\dr_i}}^{\rho_i} $ be an encoder function mapping a length $\vbr$ vector over $F_{2^{\dr_i}}$ to the parity symbols of the code whose parity-check matrix is $[\mH^{(i)},\mI_{\rho_i}]$.  Define $\deci^{(i)}:F_{2^{\dr_i}}^{\vbr}\times F_{2^{\dr_i}}^{\rho_i}\rightarrow F_{2^{\dr_i}}^{\vbr} $ to be a decoder function with inputs $\va$,$\vb$ that finds the vector $\vx$ nearest to $\va$ that satisfies $\mH^{(i)}\vx^T=\vb$.\\
\textbf{Encoder}: Given an input row vector $\vy$:
\begin{enumerate}
\item Partition $\vy=[\vy_1,\ldots,\vy_{\vbr}]$.
\item Calculate $\vu_{j}^{(i)}:=\dS^{(i)}\vy_j^T$ for each $i\in\{1,\ldots,m\}$ and $j\in\{1,\ldots,\vbr\}$.
\item Encode $\vp^{(i)}:=\enci^{(i)}(\vu_{1}^{(i)},\ldots,\vu_{\vbr}^{(i)})$ for each $i\in\{2,\ldots,m\}$, and define $\vp^{(1)}:=[\vu_{1}^{(1)},\ldots,\vu_{\vbr}^{(1)}]$.
\item Output $\vp^{(i)}$, for every $i\in\{1,\ldots,m\}$.

\end{enumerate}
\textbf{Decoder}: Given encoder outputs $\vp^{(i)}$ and reference row vector $\vz$:
\begin{enumerate}
\item Partition $\vz=[\vz_1,\ldots,\vz_{\vbr}]$.
\item Initialize $\hat{\vs}_{j}^{(1)}:=\vu_{j}^{(1)}$, for each $j\in\{1,\ldots,\vbr\}$.
\end{enumerate}
\underline{Iterate on $i=2,\ldots,m$ in 3-5 below}:
\begin{enumerate}
\item[3)] For each $j$, find the lowest-weight vector $\vv_j$ such that $\mS^{(i-1)}\vv_j^T=\hat{\vs}_{j}^{(i-1)}+\mS^{(i-1)}\vz_j^T$.
\item[4)] Take $\hat{\vy}_{j}=\vz_j+\vv_j$ and calculate $$\hat{\vu}^{(i)}:=\deci^{(i)}(\dS^{(i)}\hat{\vy}_{1}^T,\ldots,\dS^{(i)}\hat{\vy}_{\vbr}^T;\vp^{(i)}).$$
\item[5)] Concatenate $\hat{\vs}_{j}^{(i)}:=[\hat{\vs}_{j}^{(i-1)};\hat{\vu}_{j}^{(i)}]$, for each $j\in\{1,\ldots,\vbr\}$.
\end{enumerate}
\underline{Output}:
\begin{enumerate}
\item[6)] For each $j$, find the lowest-weight vector $\vv_j$ such that $\mS^{(m)}\vv_j^T=\hat{\vs}_{j}^{(m)}+\mS^{(m)}\vz_j^T$.
\item[7)] Output $\hat{\vy}_{j}=\vz_j+\vv_j$, for each $j\in\{1,\ldots,\vbr\}$.
\end{enumerate}
\end{cnst}
In each iteration $i$ the decoder of Construction~\ref{cnst:concat} takes the partial syndromes $\hat{\vs}_{j}^{(i-1)}$ from the previous iteration, uses decoders for the (inner) code $\mS^{(i-1)}$ to find the nearest word to $\vz_j$ with partial syndrome $\hat{\vs}_{j}^{(i-1)}$, and then calculates the next differential syndromes $\dS^{(i)}\hat{\vy}_{j}^T$ of these nearest words. The iteration ends with correcting errors in the differential syndromes using the (outer) code $\mH^{(i)}$, and obtaining the next partial syndromes $\hat{\vs}_{j}^{(i)}$. The efficient realization of the steps in the encoder and decoder is discussed in Section~\ref{subsec:realization}.
\subsection{Code parameters for guaranteed decoding}
Construction~\ref{cnst:concat} needs to work with the only specification being that $d_H(\vz,\vy)\leq p\vlen$, that is, a distance bound for the full block. Then we specify parameters for the codes $\{\mS^{(i)}\}_{i=1}^{m}$ and $\{\mH^{(i)}\}_{i=2}^{m}$ that are sufficient for guaranteed decoding with Construction~\ref{cnst:concat}. The following lemma is the main tool for setting these parameters.

\begin{lemma}\label{lem:concat_params}
Let $d_H(\vz,\vy)\leq p\vlen$, and take Construction~\ref{cnst:concat} with parity-check matrices $\{\mS^{(i)}\}_{i=1}^{\reso}$ of binary codes with minimum distances $\{d_i\}_{i=1}^{\reso}$. Then correct decoding of $\vs_{j}^{(\reso)}$ by $\hat{\vs}_{j}^{(\reso)}$ is guaranteed if for each $i\in\{2,\ldots,\reso\}$ we use a parity-check matrix $\mH^{(i)}$ of a code with minimum distance $\delta_i > 4p\vlen/(d_{i-1}-1)$.
\end{lemma}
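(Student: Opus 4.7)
The plan is to prove the invariant $\hat{\vs}_{j}^{(i)} = \vs_{j}^{(i)}$ for every sub-block $j\in\{1,\ldots,\vbr\}$ by induction on $i\in\{1,\ldots,\reso\}$; the lemma is then the case $i=\reso$. The base case $i=1$ is immediate from step 2 of the decoder, since $\vp^{(1)}$ is transmitted uncoded and $\vs_j^{(1)}=\vu_j^{(1)}$ by definition.

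For the inductive step I would assume $\hat{\vs}_{j}^{(i-1)} = \vs_{j}^{(i-1)}$ for every $j$ and trace through steps 3--5 of the iteration. In step 3, $\vv_j$ is a minimum-weight representative of the coset $\vy_j+\vz_j+\code^{(i-1)}$ (using the induction hypothesis to rewrite $\hat{\vs}_{j}^{(i-1)}+\mS^{(i-1)}\vz_j^T=\mS^{(i-1)}(\vy_j+\vz_j)^T$). By standard bounded-distance decoding of $\code^{(i-1)}$, whenever $d_H(\vy_j,\vz_j)\leq (d_{i-1}-1)/2$ the vector $\vy_j+\vz_j$ is itself the unique minimum-weight representative, forcing $\hat{\vy}_j=\vz_j+\vv_j=\vy_j$ and hence $\dS^{(i)}\hat{\vy}_j^T=\vu_j^{(i)}$. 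Call $j$ a \emph{failure index} if $\dS^{(i)}\hat{\vy}_j^T\neq \vu_j^{(i)}$; every failure satisfies $d_H(\vy_j,\vz_j)>(d_{i-1}-1)/2$. Combining this with the global distance budget $\sum_j d_H(\vy_j,\vz_j)=d_H(\vy,\vz)\leq p\vlen$ gives a strict bound $n_f<2p\vlen/(d_{i-1}-1)$ on the number of failure indices.

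In step 4, the input to $\deci^{(i)}$ is a length-$\vbr$ word over $F_{2^{\dr_i}}$ that agrees with the target $(\vu_1^{(i)},\ldots,\vu_\vbr^{(i)})$ in all but the $n_f$ failure coordinates. The target lies in the coset $\{\vx:\mH^{(i)}\vx^T=\vp^{(i)}\}$, and any two vectors $\vx,\vx'$ in this coset satisfy $\mH^{(i)}(\vx-\vx')^T=0$, so $(\vx-\vx',\vzero)$ is a codeword of the outer code with parity-check matrix $[\mH^{(i)},\mI_{\rho_i}]$; being nonzero it has symbol-weight at least $\delta_i$, whence $d_H(\vx,\vx')\geq \delta_i$. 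The assumption $\delta_i>4p\vlen/(d_{i-1}-1)$ combined with $n_f<2p\vlen/(d_{i-1}-1)$ gives $2n_f<\delta_i$, so $(\vu_1^{(i)},\ldots,\vu_\vbr^{(i)})$ is the unique nearest coset vector to $\deci^{(i)}$'s first argument and is therefore returned exactly. Concatenating in step 5 with $\hat{\vs}_j^{(i-1)}=\vs_j^{(i-1)}$ gives $\hat{\vs}_j^{(i)}=\vs_j^{(i)}$ for all $j$, closing the induction.

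The main subtlety I would take care to spell out is the step that translates ``$[\mH^{(i)},\mI_{\rho_i}]$ has minimum distance $\delta_i$'' into ``any two preimages of $\vp^{(i)}$ under $\mH^{(i)}$ are at Hamming distance $\geq\delta_i$'': this is what licenses treating $\deci^{(i)}$ as a bounded-distance decoder with correction radius $(\delta_i-1)/2$ even though it acts only on the $\vbr$ information symbols. The rest of the argument is a straightforward counting exercise that matches the inner-code correction radius $(d_{i-1}-1)/2$ against the total error budget $p\vlen$ and then against the outer-code correction radius $(\delta_i-1)/2$, which is where the factor of $4$ in the bound $\delta_i>4p\vlen/(d_{i-1}-1)$ arises.
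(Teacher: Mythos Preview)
Your argument is correct and matches the paper's proof: both count the sub-blocks $j$ with $d_H(\vy_j,\vz_j)>(d_{i-1}-1)/2$, bound their number strictly below $2p\vlen/(d_{i-1}-1)$ via the global distance budget, and then invoke the outer code's distance $\delta_i$ to correct those symbol errors, maintaining $\hat{\vs}_j^{(i)}=\vs_j^{(i)}$ inductively. One small correction: the code with parity-check matrix $[\mH^{(i)},\mI_{\rho_i}]$ does \emph{not} have minimum distance $\delta_i$ (the paper itself calls it ``a poor error-correcting code''); the fact you need is simply that $\mH^{(i)}(\vx-\vx')^T=0$ makes $\vx-\vx'$ a nonzero codeword of the length-$\vbr$ code defined by $\mH^{(i)}$, which \emph{is} the code with minimum distance $\delta_i$ --- so your detour through the extended code is unnecessary and the claim about its minimum distance should be dropped.
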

\begin{proof}
The basic observation is that  $d_H(\vz_j,\vy_j)>(d_{i-1}-1)/2$ can occur in less than $2pn/(d_{i-1}-1)$ of the indices $j\in\{1,\ldots,\vbr\}$. Since the previous inequality is necessary for $\hat{\vy}_{j}\neq \vy_{j}$ in step 4, the decoder $\deci^{(i)}$ will see less than $2p\vlen/(d_{i-1}-1)$ errors, and can correct them with distance  $\delta_i>4p\vlen/(d_{i-1}-1)$ for the code $\mH^{(i)}$. Recovering the correct $\vu_{j}^{(i)}$ for all $i,j$ guarantees that at every iteration $i$, $\hat{\vs}_{j}^{(i)}=\vs_{j}^{(i)}$, including in iteration $\reso$.
\end{proof}

Recall $\vlen=\blen\vbr$, and pick an integer $\reso$. For the matrix $\mS^{(i)}$ we specify the minimum distance
\begin{equation} d_i = 4p\blen+\frac{i}{\reso}\left(\frac{1}{2}-4p \right)\blen+1,~i\in\{1,\ldots,\reso-1\},\label{eq:concat_di}\end{equation}
and for $\mS^{(\reso)}$ we take a square full-rank matrix, hence $d_{\reso}=\infty$ meaning that the last code is the trivial code with just the all-zero codeword. Note that the $d_i$ from $i=1$ to $\reso-1$ form an affine progression between $4p\blen+1$ and $\frac{1}{2}\blen+1$ (not inclusive). For the parity-check matrices $\mH^{(i)}$ we define the corresponding distances to satisfy Lemma~\ref{lem:concat_params}
\begin{equation} \delta_i = \lfloor 4p\vlen/(d_{i-1}-1)\rfloor+1,~i\in\{2,\ldots,\reso\}.\label{eq:concat_deltai}\end{equation}

\textbf{Rate calculation}:
To calculate the compression rate of Construction~\ref{cnst:concat} we use the simple formula in the next lemma.
\begin{lemma}\label{lem:rate_concat}
The total number of bits output by the encoder of Construction~\ref{cnst:concat} is
\begin{equation}
\sum_{i=1}^{\reso}|\vp^{(i)}| =  r_{1}\vbr + \sum_{i=2}^{\reso}\dr_{i} \rho_{i}.\label{eq:concat_rate}
\end{equation}
\end{lemma}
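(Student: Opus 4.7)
The plan is a direct accounting exercise: I will compute $|\vp^{(i)}|$ separately for $i=1$ and for $i \geq 2$, and then sum.

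First, for $i=1$, the encoder sets $\vp^{(1)} := [\vu_{1}^{(1)},\ldots,\vu_{\vbr}^{(1)}]$, where $\vu_j^{(1)} = \dS^{(1)}\vy_j^T$. By the definition preceding Construction~\ref{cnst:concat}, $\dS^{(1)} = \mS^{(1)}$ has $\dr_1 = r_1$ rows, so each $\vu_j^{(1)}$ is a binary column vector of length $r_1$. Concatenating $\vbr$ of these yields $|\vp^{(1)}| = r_1 \vbr$ bits, which matches the first term in~\eqref{eq:concat_rate}.

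Next, for each $i \in \{2,\ldots,\reso\}$, we have $\vp^{(i)} = \enci^{(i)}(\vu_{1}^{(i)},\ldots,\vu_{\vbr}^{(i)})$, where $\enci^{(i)}$ outputs the parity symbols of the $F_{2^{\dr_i}}$-linear code whose parity-check matrix is $[\mH^{(i)},\mI_{\rho_i}]$. Since $\mI_{\rho_i}$ is square of order $\rho_i$, the systematic encoder produces exactly $\rho_i$ parity symbols from $F_{2^{\dr_i}}$; each such symbol is represented by $\dr_i$ bits, giving $|\vp^{(i)}| = \dr_i \rho_i$ bits.

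Summing the two contributions gives $\sum_{i=1}^{\reso} |\vp^{(i)}| = r_1 \vbr + \sum_{i=2}^{\reso} \dr_i \rho_i$, as claimed. There is no substantive obstacle here; the only thing to verify carefully is the convention that $\vp^{(1)}$ is defined separately (as raw syndromes rather than parity symbols of an outer code), since no outer code $\mH^{(1)}$ is introduced in Construction~\ref{cnst:concat}, and this is precisely why the first term in~\eqref{eq:concat_rate} is $r_1 \vbr$ rather than $\dr_1 \rho_1$.
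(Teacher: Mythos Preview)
Your proof is correct and follows essentially the same approach as the paper's own proof, which simply notes that the formula is immediate from item~3 of the encoder in Construction~\ref{cnst:concat}. You have merely spelled out in more detail why $|\vp^{(1)}|=r_1\vbr$ and $|\vp^{(i)}|=\dr_i\rho_i$ for $i\geq 2$, which is exactly the accounting the paper leaves implicit.
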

\begin{proof}
Immediate from item 3 in the encoder of Construction~\ref{cnst:concat}.
$r_i$ is the redundancy of the binary code $\mS^{(i)}$ with minimum distance $d_i$, specifically $r_\reso=\blen$, and recall the definition $\dr_i=r_{i}-r_{i-1}$. $\rho_i$ is the redundancy of the $2^{\dr_i}$-ary code $\mH^{(i)}$ with minimum distance $\delta_i$.
\end{proof}
To get the asymptotic compression rate achievable with Construction~\ref{cnst:concat} we use the Gilbert-Varshamov bound for the binary codes $\mS^{(i)}$
\begin{equation} r_i = \blen H\left(\frac{d_i}{\blen}\right), \label{eq:r_i}\end{equation}
and the Singleton bound for the $2^{\dr_i}$-ary code $\mH^{(i)}$
\[ \rho_i = \delta_i.\]
The Singleton bound is achievable, e.g. with Reed-Solomon codes, when $2^{\dr_i}\geq t$. Since every $\dr_i$ grows linearly with $\blen$, for this condition to be met it is sufficient that $\blen$ is at least logarithmic in $\vbr=\vlen/\blen$, for example when $\blen = \log \vlen$.
Now we get the compression rate:
\begin{proposition}\label{prop:rate_concat}
For any constant integer $\reso$ the compression rate of Construction~\ref{cnst:concat}, which is the total number of bits output by the encoder divided by $\vlen$ is
\begin{equation}
H\left(4p+\frac{1}{\reso}\left(\frac{1}{2}-4p \right)\right)+\sum_{i=2}^{\reso} \left[H\left(4p+\frac{i}{\reso}\left(\frac{1}{2}-4p \right)\right)-H\left(4p+\frac{i-1}{\reso}\left(\frac{1}{2}-4p \right)\right)\right]\cdot\frac{4p}{4p+\frac{i-1}{\reso}\left(\frac{1}{2}-4p \right)}.\label{eq:concat_rate_exp}
\end{equation}
\end{proposition}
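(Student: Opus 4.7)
The plan is to start from the exact bit-count formula of Lemma~\ref{lem:rate_concat}, divide it by $\vlen=\blen\vbr$, and substitute the asymptotic expressions for the inner-code redundancies $r_i$ (Gilbert-Varshamov) and the outer-code redundancies $\rho_i$ (Singleton, realized by Reed-Solomon). The rest of the proof is bookkeeping, plus checking that the various $o(\blen)$ corrections vanish in the limit.

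For the leading term, Gilbert-Varshamov \eqref{eq:r_i} and the definition \eqref{eq:concat_di} of $d_1$ give
\[
\frac{r_1\vbr}{\vlen}=\frac{r_1}{\blen}\longrightarrow H\!\left(4p+\frac{1}{\reso}\left(\frac{1}{2}-4p\right)\right),
\]
since the additive $+1$ inside $d_i$ is negligible compared with $\blen$. This produces the first summand of \eqref{eq:concat_rate_exp}.

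For $i\in\{2,\ldots,\reso\}$, I would write $\dr_i = r_i - r_{i-1} = \blen\bigl[H(d_i/\blen) - H(d_{i-1}/\blen)\bigr] + o(\blen)$, where for the terminating index $i=\reso$ we adopt the convention $d_\reso/\blen \mapsto 1/2$ so that $r_\reso/\blen = 1 = H(1/2)$ matches the full-rank $\mS^{(\reso)}$. Using the Singleton bound $\rho_i=\delta_i$ together with \eqref{eq:concat_deltai}, we have $\rho_i \sim 4p\vlen/d_{i-1}$, which combines into
\[
\frac{\dr_i\rho_i}{\vlen}\longrightarrow \bigl[H(d_i/\blen)-H(d_{i-1}/\blen)\bigr]\cdot\frac{4p}{d_{i-1}/\blen}.
\]
Substituting $d_{i-1}/\blen\to 4p+\frac{i-1}{\reso}(\frac{1}{2}-4p)$ and $d_i/\blen\to 4p+\frac{i}{\reso}(\frac{1}{2}-4p)$ recovers exactly the $i$-th term of the sum in \eqref{eq:concat_rate_exp}.

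The main technical care, rather than a deep obstacle, is in two places. First, achievability of the Singleton bound by $\mH^{(i)}$ requires the field size $2^{\dr_i}$ to be at least $\vbr$. Since each $\dr_i$ grows linearly in $\blen$ while $\vbr=\vlen/\blen$, it suffices that $\blen$ grows at least like $\log\vlen$ (the suggestion $\blen=\log\vlen$ from the surrounding text works), which simultaneously guarantees that the Gilbert-Varshamov estimate $r_i=\blen H(d_i/\blen)+o(\blen)$ is tight and that the $\lfloor\cdot\rfloor+1$ in \eqref{eq:concat_deltai} has vanishing effect after normalization by $\vlen$. Second, the $i=\reso$ contribution has to be treated separately because $d_\reso=\infty$; the convention $d_\reso/\blen\to 1/2$ described above makes the $i=\reso$ summand consistent with the other summands and gives the final $1-H(4p+\frac{\reso-1}{\reso}(\frac{1}{2}-4p))$-type contribution in \eqref{eq:concat_rate_exp}. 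Summing the first term and all $\reso-1$ remaining terms yields the stated expression.
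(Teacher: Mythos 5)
Your proof is correct and follows exactly the same route as the paper's (very terse) proof: substitute the Gilbert–Varshamov expression $r_i=\blen H(d_i/\blen)$ and the Singleton expression $\rho_i=\delta_i$ into the bit-count formula of Lemma~\ref{lem:rate_concat}, then normalize by $\vlen$. The extra bookkeeping you supply — the convention $d_\reso/\blen\to 1/2$ so that $r_\reso=\blen$ fits the same template, the field-size condition $2^{\dr_i}\ge\vbr$ forcing $\blen=\Omega(\log\vlen)$, and the observation that the $+1$ and $\lfloor\cdot\rfloor$ corrections vanish after normalization — is exactly the detail the paper leaves implicit, so this is a faithful expansion rather than a different argument.
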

\begin{proof}
The expression in~\eqref{eq:concat_rate_exp} is obtained by substituting in~\eqref{eq:concat_rate} the Gilbert Varshamov bound for $r_i$ corresponding to $d_i$ in~\eqref{eq:concat_di}, and the Singleton bound for $\rho_i$ corresponding to $\delta_i$ in~\eqref{eq:concat_deltai}, then normalizing by $\vlen$.
\end{proof}

\subsection{Realization and complexity}\label{subsec:realization}
\subsubsection{Realization}
To realize Construction~\ref{cnst:concat} efficiently, we reduce encoding and decoding operations to known operations from error-correcting codes. Because error-correcting codes are used in a substantially different way for compression, we next explain their adaptations in the concatenated scheme.\\
\underline{The function $\enci^{(i)}:F_{2^{\dr_i}}^{\vbr}\rightarrow F_{2^{\dr_i}}^{\rho_i}$} calculates the parity symbols of the code with parity-check matrix $[\mH^{(i)},\mI_{\rho_i}]$, where $\mH^{(i)}$ is a parity-check matrix of a length-$\vbr$ code with minimum distance $\delta_i$, given in systematic form. The code $[\mH^{(i)},\mI_{\rho_i}]$ is a lengthened version of the code $\mH^{(i)}$. Note that this code is a poor error-correcting code, but works here (with better parameters) because there are no errors in the symbols of $\vp^{(i)}$. Given a systematic encoder for the code defined by $\mH^{(i)}$ (for example a Reed-Solomon code), we can realize $\enci^{(i)}$ by first encoding the first $\vbr-\rho_i$ input symbols to a word of $\mH^{(i)}$, and then subtracting from the $\rho_i$ parity symbols the remaining $\rho_i$ inputs. This guarantees a 1-1 mapping from length-$t$ input vectors to length-$(t+\rho_i)$ output vectors $\vc$ with $[\mH^{(i)},\mI_{\rho_i}]\vc^T=\vzero$. \\
\underline{The function $\deci^{(i)}:F_{2^{\dr_i}}^{\vbr}\times F_{2^{\dr_i}}^{\rho_i}\rightarrow F_{2^{\dr_i}}^{\vbr}$} needs to find the vector $\vx$ nearest to $\va$ that satisfies $\mH^{(i)}\vx^T=\vb$ ($\va$,$\vb$ are the first and second inputs to $\deci^{(i)}$, respectively). $\va$ is the vector of differential syndromes of the estimated $\hat{\vy}_j$ sub-vectors; $\vb$ is the output of $\enci^{(i)}$ that is available to the decoder without error. Given a syndrome decoder for the code $\mH^{(i)}$ (for example a Berlekamp-Massey Reed-Solomon decoder), $\deci^{(i)}$ can be implemented by invoking the decoder on the syndrome $\mH^{(i)}\va^T-\vb$, and subtracting the output minimal-weight error word from $\va$ to obtain $\vx$. This gives the desired output because we look for the minimal-weight $\ve$ such that $\vx=\va-\ve$ and $\mH^{(i)}\vx^T=\vb$, implying $\mH^{(i)}\ve^T=\mH^{(i)}\va^T-\vb$. Since $\vb$ is error-free, the correction capability of $\deci^{(i)}$ is the same as that of the syndrome decoder operating on the code $\mH^{(i)}$. \\
Another function needed in Construction~\ref{cnst:concat} appears in item 3 of its decoder: finding low-weight vectors with a given syndrome can be realized by known syndrome decoders for the codes $\mS^{(i)}$.

\subsubsection{Complexity}
Per the realizations above of the functions in Construction~\ref{cnst:concat}, we obtain the following encoding and decoding asymptotic complexities.\\
\underline{Decoding complexity}: for the codes $\mH^{(i)}$ we take Reed-Solomon codes over a field of size $\vbr$, which can be decoded with complexity $O(\vbr\log^2\vbr)$~\cite{JustesenJ:76}. For the codes $\mS^{(i)}$ we take binary linear codes that can be decoded with complexity at most $\blen\cdot 2^{\blen/2}$ using the trellis representation of the code (it is known~\cite{WolfJ:78} that every linear block code can be represented by a trellis with at most $2^{\min(r_i,\blen-r_i)}\leq 2^{\blen/2} $ states in each coordinate). Now taking $\blen = O(\log \vlen)$ we get the total complexity of $O(\vlen^{1.5})$, because for each block we invoke  $\vbr=\vlen/\log \vlen$ binary trellis decoders with total complexity
\[ O\left(\frac{\vlen}{\log\vlen}\sqrt{\vlen} \log\vlen \right) = O\left(\vlen^{1.5}\right).\]
The complexity of the Reed-Solomon decoders is asymptotically negligible compared to $O\left(\vlen^{1.5}\right)$ because we invoke a constant number $\reso$ of Reed-Solomon decoders, which give $O\left(\frac{\vlen}{\log\vlen}\log^2\frac{\vlen}{\log\vlen}\right)$ operations over finite-field elements represented as size $\alpha\log \vlen$ binary vectors (for some real $\alpha<1$), giving in total not more than $O\left(\vlen\log^3\vlen\right)$ bit operations.

Note that a construction using the standard generalized-concatenation half-minimum-distance decoder would have a higher complexity of $O\left(\vlen^{2}\log \vlen \right)$~\cite{ZyablovV:75}. Because $\vlen^2$ is considered prohibitive for long sequences, the decoder and parameters specified for Construction~\ref{cnst:concat} give a more practical alternative for realization.

\subsection{Improved construction}
To reduce the overall compression rate of the scheme, we now propose an improvement of Construction~\ref{cnst:concat} that still enjoys $O\left(\vlen^{1.5}\right)$ decoding complexity. The idea is that employing error-and-erasure decoding allows to set the correction parameters of the codes  $\mS^{(i)}$ and  $\mH^{(i)}$ such that less total redundancy is required. In the following improved scheme, we allow the decoder of $\mS^{(i)}$ to declare decoding failure when the distance of $\hat{\vy}_{j}$, the closest vector to $\vz_j$, is greater than $d_i/3$.

\begin{cnst}\label{cnst:concat2}
We repeat Construction~\ref{cnst:concat}, only changing the specification of $\deci^{(i)}$. Define $\deci^{(i)}:(F_{2^{\dr_i}}\cup *)^{\vbr}\times F_{2^{\dr_i}}^{\rho_i}\rightarrow F_{2^{\dr_i}}^{\vbr} $ to be a decoder function with inputs $\va$,$\vb$ that finds a vector $\vx$ that satisfies: 1) $\mH^{(i)}\vx=\vb$, and 2) $\vx$ is nearest to $\va$ on the subset of coordinates that are not $*$ in $\va$.\\
\textbf{Encoder}: same as Construction~\ref{cnst:concat}.\\
\textbf{Decoder}: Given an input row vector $\vz$:
\begin{enumerate}
\item Partition $\vz=[\vz_1,\ldots,\vz_{\vbr}]$.
\item Initialize $\hat{\vs}_{j}^{(1)}:=\vu_{j}^{(1)}$, for each $j\in\{1,\ldots,\vbr\}$.
\end{enumerate}
\underline{Iterate on $i=2,\ldots,m$ in 3-5 below}:
\begin{enumerate}
\item[3)] For each $j$, find the lowest-weight vector $\vv_j$ such that $\mS^{(i-1)}\vv_j^T=\hat{\vs}_{j}^{(i-1)}+\mS^{(i-1)}\vz_j^T$.
\item[4)] If the weight of $\vv_j$ is at most $(d_{i-1}-1)/3$, calculate $\hat{\vy}_{j}=\vz_j+\vv_j$ and take $\va_{j}:=\dS^{(i)}\hat{\vy}_{j}^T$; otherwise take $\va_{j}:=*$. Now calculate $$\hat{\vu}^{(i)}:=\deci^{(i)}(\va_{1},\ldots,\va_{t};\vp^{(i)}).$$
\item[5)] Concatenate $\hat{\vs}_{j}^{(i)}:=[\hat{\vs}_{j}^{(i-1)};\hat{\vu}_{j}^{(i)}]$, for each $j\in\{1,\ldots,\vbr\}$.
\end{enumerate}
\underline{Output}: same as Construction~\ref{cnst:concat}.
\end{cnst}

Now we specify parameters for the codes $\{\mS^{(i)}\}_{i=1}^{m}$ and $\{\mH^{(i)}\}_{i=2}^{m}$ that are sufficient for guaranteed decoding with Construction~\ref{cnst:concat2}. The following lemma is the modification of Lemma~\ref{lem:concat_params} to the improved construction.
\begin{lemma}\label{lem:concat_params2}
Let $d_H(\vz,\vy)\leq p\vlen$, and take Construction~\ref{cnst:concat2} with parity-check matrices $\{\mS^{(i)}\}_{i=1}^{m}$ of binary codes with minimum distances $\{d_i\}_{i=1}^{m}$. Then correct decoding of $\vs_{j}^{(m)}$ by $\hat{\vs}_{j}^{(m)}$ is guaranteed if for each $i\in\{2,\ldots,m\}$ we use a parity-check matrix $\mH^{(i)}$ of a code with minimum distance $\delta_i \geq 3pn/(d_{i-1}-1)$.
\end{lemma}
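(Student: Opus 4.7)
The proof will follow the same inductive structure as Lemma~\ref{lem:concat_params}, iterating on $i$ with the inductive hypothesis that $\hat{\vs}_j^{(i-1)}=\vs_j^{(i-1)}$ for every $j$. Under this hypothesis the difference vector $\vy_j-\vz_j$ is always a valid candidate for the min-weight $\vv_j$ computed in step~3, so the output $\vv_j$ of the inner decoder satisfies $\|\vv_j\|\le d_H(\vy_j,\vz_j)$, and any discrepancy $\vv_j\neq \vy_j-\vz_j$ means that $(\vy_j-\vz_j)-\vv_j$ is a nonzero codeword of $\code^{(i-1)}$, hence of weight at least $d_{i-1}$.

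The main new ingredient is the three-way classification of sub-blocks at iteration $i$ induced by the erasure threshold $(d_{i-1}-1)/3$ in Construction~\ref{cnst:concat2}. I will partition $\{1,\dots,\vbr\}$ into the correctly-decoded indices, the error set $I_E$ (where $\|\vv_j\|\le (d_{i-1}-1)/3$ but $\hat{\vy}_j\ne \vy_j$), and the erasure set $I_S$ (where $\|\vv_j\|>(d_{i-1}-1)/3$). Using the minimum-distance argument described above together with the triangle inequality, $j\in I_E$ forces
\[ d_H(\vy_j,\vz_j)\ge d_{i-1}-\tfrac{d_{i-1}-1}{3}>\tfrac{2(d_{i-1}-1)}{3}, \]
while $j\in I_S$ forces $d_H(\vy_j,\vz_j)\ge \|\vv_j\|>(d_{i-1}-1)/3$.

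With $E=|I_E|$ and $S=|I_S|$, the key combinatorial step is to combine these two bounds with the global budget $\sum_{j} d_H(\vy_j,\vz_j)\le p\vlen$ to show that
\[ 2E + S \;<\; \frac{3p\vlen}{d_{i-1}-1}. \]
This is most cleanly seen by noting that for every $j$ the indicator contribution $2\cdot\mathbb{1}[j\in I_E]+\mathbb{1}[j\in I_S]$ is dominated by $\frac{3\,d_H(\vy_j,\vz_j)}{d_{i-1}-1}$; summing over $j$ yields the displayed bound. Once this inequality is in hand, the standard error-and-erasure correction guarantee for the outer code $\mH^{(i)}$, namely correctability whenever $2E+S\le \delta_i-1$, is invoked with $\delta_i\ge 3p\vlen/(d_{i-1}-1)$ (combined with the strict inequality above and integrality of $2E+S$) to conclude that $\deci^{(i)}$ returns $\hat{\vu}_j^{(i)}=\vu_j^{(i)}$ for every $j$. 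Concatenation then gives $\hat{\vs}_j^{(i)}=\vs_j^{(i)}$, closing the induction and proving the claim for $i=m$.

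The main obstacle, and the reason the improved bound $3p\vlen/(d_{i-1}-1)$ replaces the $4p\vlen/(d_{i-1}-1)$ of Lemma~\ref{lem:concat_params}, lies in getting the sharper minimum-distance bookkeeping in case $I_E$: one must carefully argue that an undetected miscorrection by the inner decoder can only occur when $d_H(\vy_j,\vz_j)$ exceeds $2(d_{i-1}-1)/3$ (not merely $(d_{i-1}-1)/2$ as before), since otherwise the accepted $\vv_j$ of weight at most $(d_{i-1}-1)/3$ cannot be confused with the true error. The rest of the proof is essentially bookkeeping and is parallel to the proof of Lemma~\ref{lem:concat_params}.
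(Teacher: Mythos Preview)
Your proposal is correct and follows essentially the same approach as the paper: both arguments hinge on the observation that an undetected inner miscorrection (your $I_E$) forces $d_H(\vy_j,\vz_j)>2(d_{i-1}-1)/3$ while a declared erasure (your $I_S$) forces $d_H(\vy_j,\vz_j)>(d_{i-1}-1)/3$, leading to the same inequality $2E+S<3p\vlen/(d_{i-1}-1)$ and then the standard error-and-erasure bound for $\mH^{(i)}$. The only cosmetic difference is that the paper first bins indices by the value of $d_H(\vy_j,\vz_j)$ (its $\tau_1,\tau_2$) and then bounds errors and erasures in terms of those counts, whereas you bin directly by decoder outcome and bound the distance in each bin; the underlying bookkeeping is identical.
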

\begin{proof}
Denote by $\tau_{1}$ the number of indices $j\in\{1,\ldots,\vbr\}$ where $(d_{i-1}-1)/3<d_H(\vz_j,\vy_j)\leq 2(d_{i-1}-1)/3$, and by $\tau_{2}$ the number of indices where  $d_H(\vz_j,\vy_j)>2(d_{i-1}-1)/3$. From the global distance constraint it is implied that $\tau_{1}+2\tau_{2}<3pn/(d_{i-1}-1)$. The code $\mS^{(i-1)}$ has minimum distance $d_{i-1}$ and can thus simultaneously correct up to $(d_{i-1}-1)/3$ errors and detect up to $2(d_{i-1}-1)/3$ errors. Hence the decoder $\deci^{(i)}$ will see $\tau_{e}\leq \tau_{2}$ errors and $\tau_{*}=\tau_{1} +\tau_{2}-\tau_{e} $ erasures ($*$ symbols in Construction~\ref{cnst:concat2}). It is observed that $\tau_{*}+2\tau_{e}\leq \tau_{1}+2\tau_{2}<3pn/(d_{i-1}-1)$, and hence minimum distance of $\delta_i \geq 3pn/(d_{i-1}-1)$ is sufficient for the code $\mH^{(i)}$ to recover $\vu_{j}^{(i)}$ and in turn $\vs_{j}^{(i)}$ correctly.
\end{proof}
Recall $\vlen=\blen\vbr$, and pick an integer $\reso$. For the matrix $\mS^{(i)}$ we specify the minimum distance
\begin{equation} d_i = 3pk+\frac{i}{\reso}\left(\frac{1}{2}-3p \right)k+1,~i\in\{1,\ldots,\reso-1\},\label{eq:concat2_di}\end{equation}
and for $\mS^{(\reso)}$ we take a square full-rank matrix, hence $d_{\reso}=\infty$ meaning that the last code is the trivial code with just the all-zero codeword. Note that the $d_i$ from $i=1$ to $\reso-1$ form an affine progression between $3pk+1$ and $\frac{1}{2}k+1$ (not inclusive). For the parity-check matrices $\mH^{(i)}$ we define the corresponding distances
\begin{equation} \delta_i = \lceil 3pn/(d_{i-1}-1)\rceil,~i\in\{2,\ldots,\reso\}.\label{eq:concat2_deltai}\end{equation}

To get the asymptotic compression rate achievable with Construction~\ref{cnst:concat2} we adjust Proposition~\ref{prop:rate_concat} to the $d_i$ and $\delta_i$ of the improved construction.
\begin{proposition}\label{prop:rate_concat2}
For any constant integer $\reso$ the compression rate of Construction~\ref{cnst:concat2}, which is the total number of bits output by the encoder divided by $\vlen$ is
\begin{equation}
H\left(3p+\frac{1}{\reso}\left(\frac{1}{2}-3p \right)\right)+\sum_{i=2}^{\reso} \left[H\left(3p+\frac{i}{\reso}\left(\frac{1}{2}-3p \right)\right)-H\left(3p+\frac{i-1}{\reso}\left(\frac{1}{2}-3p \right)\right)\right]\cdot\frac{3p}{3p+\frac{i-1}{\reso}\left(\frac{1}{2}-3p \right)}.\label{eq:concat2_rate_exp}
\end{equation}
\end{proposition}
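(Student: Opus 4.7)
The plan is to mirror almost verbatim the proof of Proposition~\ref{prop:rate_concat}, since Construction~\ref{cnst:concat2} differs from Construction~\ref{cnst:concat} only in the decoding rule, while Lemma~\ref{lem:rate_concat} computes $|\enc(\vy)|$ purely from the algebraic parameters $r_i$, $\dr_i$, $\rho_i$ of the inner and outer codes. Hence I would reuse Lemma~\ref{lem:rate_concat} directly as the starting identity, and only need to substitute the new distance schedule~\eqref{eq:concat2_di}--\eqref{eq:concat2_deltai} dictated by Lemma~\ref{lem:concat_params2}.

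The first step is to invoke, for each inner code $\mS^{(i)}$, the Gilbert--Varshamov bound to write $r_i/\blen = H(d_i/\blen) + o(1)$ as $\blen\to\infty$. Using~\eqref{eq:concat2_di}, this gives $r_i/\blen = H\!\left(3p+\tfrac{i}{\reso}(\tfrac12-3p)\right) + o(1)$. The nested set $\{\mS^{(i)}\}$ can be obtained by the standard greedy construction (start from the innermost code and successively append rows); this is the same point on which Proposition~\ref{prop:rate_concat} tacitly relies and needs no separate argument. The second step is to invoke the Singleton bound with Reed--Solomon codes for $\mH^{(i)}$, giving $\rho_i = \delta_i$. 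The field-size condition $2^{\dr_i}\geq \vbr$ is met by choosing $\blen=\Theta(\log \vlen)$ as in Section~\ref{subsec:realization}, because $\dr_i$ grows linearly with $\blen$ while $\vbr=\vlen/\blen$ grows only polynomially in $\vlen$. Using~\eqref{eq:concat2_deltai}, this yields $\delta_i/\vbr \;\to\; 3p \big/\bigl(3p+\tfrac{i-1}{\reso}(\tfrac12-3p)\bigr)$ asymptotically, since $d_{i-1}-1\sim d_{i-1}$ and $\vlen=\blen\vbr$.

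The third step is to substitute into~\eqref{eq:concat_rate} divided by $\vlen=\blen\vbr$. The leading term becomes $r_1/\blen$, giving the first entropy in~\eqref{eq:concat2_rate_exp}. For the sum, the telescoping identity $\dr_i = r_i-r_{i-1}$ turns $\dr_i/\blen$ into the bracketed entropy difference $H(d_i/\blen)-H(d_{i-1}/\blen)$, and multiplication by $\rho_i/\vbr = \delta_i/\vbr$ delivers the factor $3p/\bigl(3p+\tfrac{i-1}{\reso}(\tfrac12-3p)\bigr)$. Assembling these terms reproduces~\eqref{eq:concat2_rate_exp} exactly.

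The only nontrivial point, and the one I would watch most carefully, is keeping track of what is being minimized versus what is achievable: $r_i$ comes from an upper bound (GV) on the required redundancy, while $\rho_i$ comes from a tight bound (Singleton) realized by RS codes. Both are upper bounds on the true rate, which is all that is claimed. No new analytic estimate is needed beyond those already used in Proposition~\ref{prop:rate_concat}; the $3p$ replacing $4p$ throughout is the sole numerical consequence of moving from the half-minimum-distance decoding rule of Lemma~\ref{lem:concat_params} to the error-and-erasure rule of Lemma~\ref{lem:concat_params2}.
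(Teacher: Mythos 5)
Your proposal is correct and follows essentially the same route as the paper's own (very terse) proof: both start from the rate identity of Lemma~\ref{lem:rate_concat}, which still applies verbatim because Construction~\ref{cnst:concat2} only changes the decoder, then substitute the Gilbert--Varshamov bound $r_i=\blen H(d_i/\blen)$ with the $3p$ schedule~\eqref{eq:concat2_di} and the Singleton bound $\rho_i=\delta_i$ with~\eqref{eq:concat2_deltai}, and normalize by $\vlen=\blen\vbr$. Your additional remarks (telescoping of $\dr_i$, the field-size condition for Reed--Solomon, and the observation that $3p$ replacing $4p$ is the only substantive change) are accurate fillings-in of steps the paper leaves implicit.
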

\begin{proof}
The expression in~\eqref{eq:concat2_rate_exp} is obtained by substituting in~\eqref{eq:concat_rate} the Gilbert Varshamov bound for $r_i$ corresponding to $d_i$ in~\eqref{eq:concat2_di}, and the Singleton bound for $\rho_i$ corresponding to $\delta_i$ in~\eqref{eq:concat2_deltai}, then normalizing by $\vlen$.
\end{proof}
We plot in Fig.~\ref{fig:rt_compare} the resulting compression rates of Construction~\ref{cnst:concat} (dashed) and Construction~\ref{cnst:concat2} (solid), as a function of $p$, in the range $p\in [0,2.5\cdot 10^{-3}]$; the plots evaluate the expressions in~\eqref{eq:concat_rate_exp},~\eqref{eq:concat2_rate_exp}, respectively, with $\reso=20000$. We do not compare these rates to the better rates of the basic Construction~\ref{cnst:orlit} ($H(2p)$ assuming error-correcting codes meeting the Gilbert Varshamov bound), because of its exponential decoding complexity. A more relevant comparison is with practical DNA compression algorithms, which currently give compression rates in the range $[0.1,0.2]$ (where the lower rates are achieved by algorithms with reference at both the encoder and decoder)~\cite{BonfieldJ:13}. We conclude that for the range plotted in Fig.~\ref{fig:rt_compare}, Construction~\ref{cnst:concat2} gives rates competitive with the state-of-the-art in DNA compression, and without need to use a reference at the encoder.
\begin{figure}[htbp]
\centerline{\includegraphics[width=0.6 \textwidth]{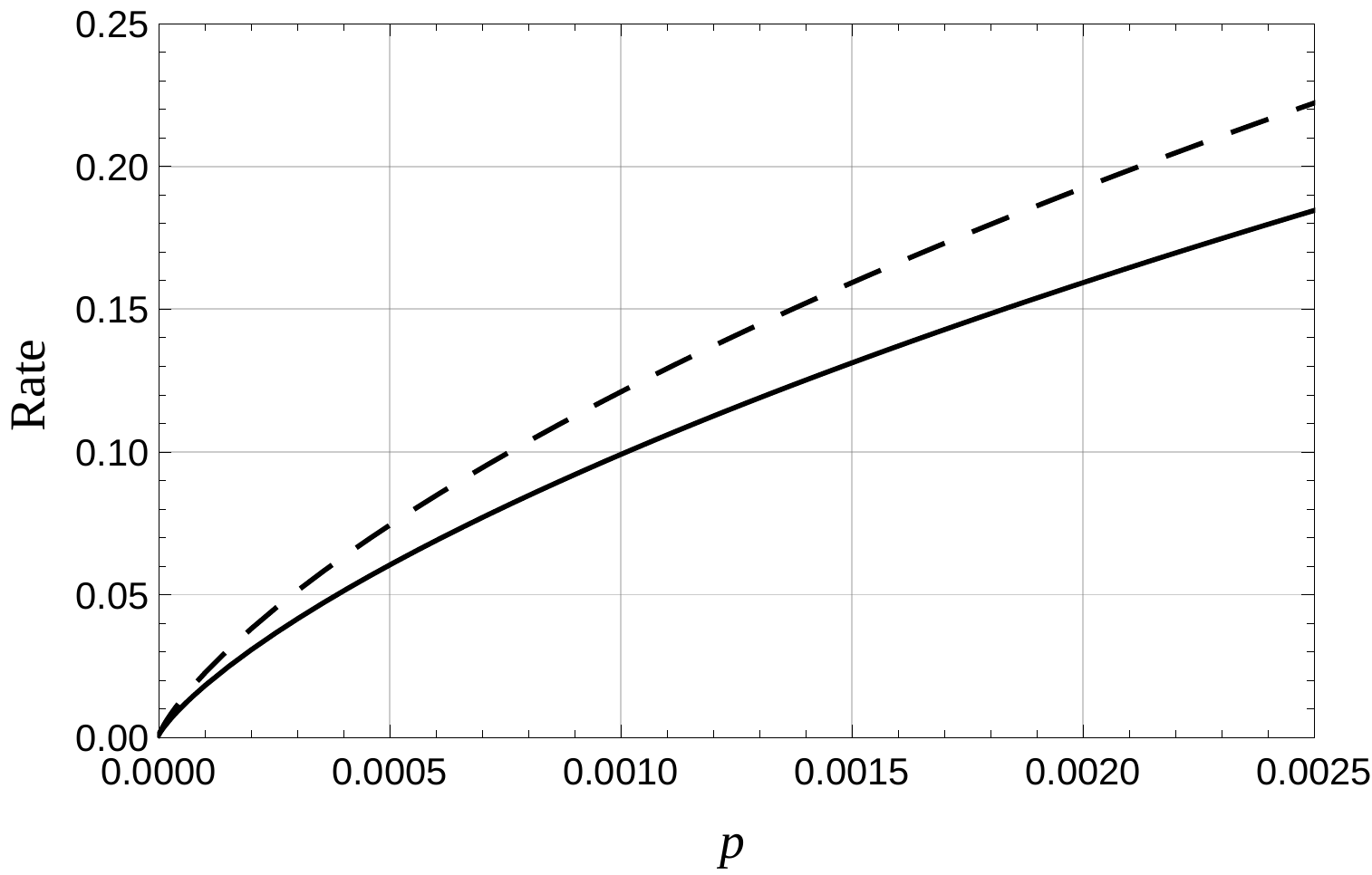}}
\caption{Compression rates of Construction~\ref{cnst:concat} (dashed) and Construction~\ref{cnst:concat2} (solid) as a function of the distance fraction $p$.}
\label{fig:rt_compare}
\end{figure}

\section{Conclusion}
The first part of the paper refines the classical problem of compression with side information using a combinatorial characterization of the size-information vectors $\repZ$. In addition to the $p$-spread parameter investigated here, it is interesting in future work to study compressibility with respect to other characterizations of $\repZ$. For example, instead of the $\max$ in~\eqref{eq:dmax}, one can characterize $\repZ$ by the full {\em spectrum} of distances in $\repZ$. The second part of the paper develops a concatenated scheme for efficient guaranteed compression with Hamming-bounded side information. A natural future work is to extend the scheme to also allow side information with insertions and deletions. While for long blocks insertions and deletions are notoriously difficult to handle, the short inner codes of the concatenated scheme may enable an efficient solution.        
\section{Acknowledgement}
We thank Neri Merhav for valuable discussions. We also thank the anonymous reviewers for valuable comments. This work was supported in part by the US-Israel Binational Science Foundation and in part by the Israel Science Foundation.
\bibliographystyle{IEEEtran}
\bibliography{myrefs1}

\end{document}